\numberwithin{equation}{section}
\newtheorem{theorem}{Theorem}[section]
\newtheorem{lemma}[theorem]{Lemma}
\newtheorem{cor}[theorem]  {Corollary}
\theoremstyle{definition}
\theoremstyle{remark}
\newtheorem{remark}{Remark}
\numberwithin{remark}{section}
\newcommand{\z}{\mathbf{0}}
\newcommand{\g}{\boldsymbol{\gamma}}
\newcommand{\e}{\boldsymbol{\eta}}
\newcommand{\s}{\lesssim}
\newcommand{\probBC}{\mathbb{P}^{\z}_{\Lambda,\mu_0}} 
\newcommand{\canBC}{Z^{\z}_{\Lambda,\beta}} 
\newcommand{\GcanBC}{\Xi^{\z}_{\Lambda,\beta}} 
\newcommand{\tN}{\tilde{N}} 
\newcommand{\bN}{\bar{N}_{\Lambda}} 
\newcommand{\bRL}{\bar{\rho}_{\Lambda}} 
\newcommand{\tRL}{\tilde{\rho}_{\Lambda}} 
\newcommand{\F}{\mathcal{F}_{\Lambda,\beta,\z}}
\newcommand{\fgc}{f^{GC}_{\Lambda,\beta,\z}}
\newcommand{\sig}{\boldsymbol{\sigma}}
\newcommand{\be}{\begin{equation}}
\newcommand{\ee}{\end{equation}}
\newcommand{\ba}{\begin{equation} \begin{aligned}}
\newcommand{\ea}{\end{aligned}\end{equation}}
\newcommand{\bes}{\begin{equation*}}
\newcommand{\ees}{\end{equation*}}
\def\1{{\mathchoice {1\mskip-4mu\mathrm l}      
		{1\mskip-4mu\mathrm l}
		{1\mskip-4.5mu\mathrm l} {1\mskip-5mu\mathrm l}}}
\begin{document}
	
	\title{Cluster expansion for the Ising model in the canonical ensemble}
    \author{Giuseppe Scola}
	\date{}
	\maketitle
	\begin{abstract} 	We show the validity of the cluster expansion in the canonical ensemble for the Ising model.
		We compare the lower bound of its radius of convergence with the one computed by the virial expansion working
		in the grand-canonical ensemble.
		Using the cluster expansion we give direct proofs with quantification of the higher order error terms for the decay of correlations, central limit theorem and large deviations.\\
		
		\noindent\emph{2020 Mathematics Subject Classification}: {60F05, 60F10, 82B05, 	82B20}
		
		 \noindent\emph{Keywords}: {Ising model, lattice system, cluster expansion, decay of correlations, virial expansion, precise large deviations, local moderate deviations}

	\end{abstract}
	
	\tableofcontents

\section{Introduction}

The analysis of the relation between thermodynamic quantities and their use for quantitative prediction of macroscopic properties of matter through its microscopic structure, is the fundamental goal of statistical mechanics. A key tool in this direction is the {\it{cluster expansion}} initially developed by Mayer \cite{mayerstatistical} for non-ideal gases viewed as a perturbation around the ideal gas. 
This method allows to express the thermodynamic quantities as absolutely convergent power series.
Over the last years many methods and generalizations have been developed mostly adapted
to the grand-canonical ensemble
as the canonical constraint seemed restrictive to the product structure of the underlying system.
As it was proved in \cite{pulvirenti2012cluster} one can remove this constraint by viewing it as a hard-core system
of clusters
and then it fits beautifully in the existing theory of the abstract polymer model \cite{fernandez2007cluster}, \cite{gruber1971general}, \cite{kotecky1986cluster}.
For the case of the Ising model, the constraint seems even more restrictive as
particles are indexed by their lattice position, with the constraint of fixed magnetization
destroying the product structure.
Hence, the key points of the present paper are the followings:

\begin{enumerate}
	\item We view the Ising model as a lattice gas \cite{farrell1966cluster} 
	and by indexing the spins (as in the continuous case) rather than their position, we can treat the canonical constraint in a similar manner as in \cite{pulvirenti2012cluster}. 
	\item We use the representation in point 1. to establish the condition of convergence working in the grand-canonical ensemble. Moreover, we compare it with the one obtained using the contour representation for the Ising model (see for example \cite{friedli2017statistical}). Furthermore, we compare the  lower bound of the radius of convergence  for the virial expansion - in density - working in the grand-canonical ensemble with the  convergence condition obtained in the canonical ensemble, as in point 1.
	\item As a by-product of point 1., we estimate the decay of correlations working directly in the canonical ensemble.
	\item We prove moderate and large deviations with quantification of the higher order error terms
	and compare with the results developed previously in \cite{del1974local} and \cite{dobrushin1994large}
	for the case of the Ising model in the grand-canonical ensemble.
	\item It is worth noticing that despite the fact that we present the method for the case of the Ising model, we expect that our approach is applicable to more general lattice systems with more complicated interactions.
\end{enumerate}



The structure of the paper is the following: in Section~\ref{Sec1}, we present the model and results. The main theorem about the validity of the cluster expansion for the canonical partition function under an appropriate condition on the density (Theorem \ref{TH1}) is given in Subsection \ref{SubSec1}. Its proof is in Section \ref{SectionCE}. The decay of the 2-point correlation for the canonical ensemble and the application of Theorem \ref{TH1} to Precise Large Deviations and Local Moderate Deviations are presented in Subsections \ref{SubSec1} and \ref{SubSec3} (respectively Theorems \ref{TH2}, \ref{TH-LD}, \ref{TH-MD} and Corollary \ref{CLT}). The proof of Theorem \ref{TH2} (2-point correlation) is in Section \ref{SectionD}. In Section \ref{SectionRD}, we show an analysis in the grand-canonical ensemble in which we compare the approach presented in \cite{friedli2017statistical} with the one presented in \cite{farrell1966cluster} and analyzed in detail in the present paper. Moreover, applying \cite{jansen2019virial}, we give the virial inversion in the grand-canonical ensemble, and we compare the resulting  - density - convergence condition with the one obtained in the canonical ensemble (Lemma \ref{LemmaCluster}). We conclude with Section \ref{SectionLMD} where we compare our approach for the study of Precise Large Deviations and Local Moderate Deviations, as presented in Theorems \ref{TH-LD}, \ref{TH-MD} and Corollary \ref{CLT}, with the one in the grand-canonical ensemble from \cite{del1974local} and \cite{dobrushin1994large}. For completeness of the exposition, the proofs of the Theorems \ref{TH-LD}, \ref{TH-MD} and Corollary \ref{CLT} (which follows from \cite{scola2020local}), are given in Appendix \ref{S2}.

\section{Notation, model and results}\label{Sec1}

\subsection{Cluster expansion in canonical ensemble }\label{SubSec1}
We consider the ferromagnetic Ising Model on a finite volume $\Lambda\subset\mathbb{Z}^d$ at small inverse temperature $\beta$.   
We denote with $\sig=(\sigma(x_1),...,\sigma(x_{|\Lambda|}))\in\{-1,1\}^{\Lambda}$ a spins vector on $\Lambda$ and with $\sig^c\in\{-1,1\}^{\Lambda^c}$ a spins vector on $\Lambda^c:=\mathbb{Z}^d\setminus\Lambda$ with fixed value $\sigma^c$, i.e.,  such that $\sigma(x)=\sigma^c$ for all $x\in\Lambda^c$. Hence, defining $\mathcal{E}_{\Lambda}:=\{\{x,x'\}\subset\mathbb{Z}^d\;|\;\{x,x'\}\cap\Lambda\ne\emptyset,\;|x-x'|=1\}$, where $|\cdot|$ is the Euclidean distance, the Hamiltonian is given by 
\begin{equation}
	\mathcal{H}_{\Lambda}^{\sig^c}(\sig):=-J \sum_{\{x,x'\}\in\mathcal{E}_{\Lambda}}\sigma(x)\sigma(x'),
	\label{HamIs}
\end{equation}
with $J\in\mathbb{R}^+$. 
The \textit{canonical partition function} at fixed magnetization $m\in( -1,-1+\epsilon),\;0<\epsilon<1$, is defined as  follows:
\begin{equation}
	\tilde{Z}^{\sig^c}_{\Lambda,\beta}(m):=\sum_{\substack{\sig\in\{-1,1\}^{\Lambda}\;:\\\frac{1}{|\Lambda|}\sum_{x\in\Lambda}\sigma(x)=m}} e^{-\beta\mathcal{H}^{\sig^c}_{\Lambda}(\sig)}.
	\label{CpfIs}
\end{equation}

We reduce the canonical partition function for the Ising model given by \eqref{CpfIs} to the one for a classic lattice gas system using the transformation:
\begin{equation}
	\sigma(x)=2\eta(x)-1,
	\label{transf}
\end{equation}
with $\eta:\mathbb{Z}^d\mapsto\{0,1\}$.  Then \eqref{HamIs} and \eqref{CpfIs} become
\begin{eqnarray}
	\mathcal{H}_{\Lambda}^{\sig^c}(\sig)\equiv\mathcal{H}_{\Lambda}^{\e^c}(\e):=4Jm'|\mathcal{E}_{\Lambda}|-J|\mathcal{E}_{\Lambda}|-4J\sum_{\{x,x'\}\in\mathcal{E}_{\Lambda}}\eta(x)\eta(x')
	\label{Ham2}
\end{eqnarray}
and
\begin{equation}
	\tilde{Z}^{\sig^c}_{\Lambda,\beta}(m)\equiv\tilde{Z}^{\e^c}_{\Lambda,\beta}(m'):=\sum_{\substack{\boldsymbol{\eta}\in\{0,1\}^{\Lambda}\;:\\\sum_{x\in\Lambda}\eta(x)=m'|\Lambda|}}e^{-\beta \mathcal{H}^{\e^c}_{\Lambda}(\e)},
	\label{CpfLg}
\end{equation}
where $m':=(m+1)/2$.

We denote with $N\equiv N(m'):=m' |\Lambda|$ the number of (indistinguishable) particles of the system and with $\mathbf{x}=(x_1,...,x_N)\in\Lambda^N$ a configuration vector. We also introduce the following ``hard-core"  potential \cite{farrell1966cluster}:

\begin{equation}
	V(x-x'):=\begin{cases}
		\;\infty\;\;\;\;\;\,\;\mathrm{if}\;x=x',\\
		\;-4J\;\;\;\mathrm{if}\;|x-x'|=1,\\
		\;0\;\;\;\;\;\;\;\,\mathrm{otherwise},
	\end{cases}
	\label{Potential}
\end{equation}
for all $x,x'\in\mathbb{Z}^d$. In this way, from \eqref{CpfLg} we can write
\begin{eqnarray}
	\tilde{Z}^{\e^c}_{\Lambda,\beta}(m')=\exp\left\{-\beta|\Lambda|\left[4Jm'\frac{|\mathcal{E}_{\Lambda}|}{|\Lambda|}-J\frac{|\mathcal{E}_{\Lambda}|}{|\Lambda|}\right]\right\}Z^{\g}_{\Lambda,\beta}(N),
	\label{Cpf}
\end{eqnarray}
where 
\begin{equation}
	Z^{\g}_{\Lambda,\beta}(N):=\frac{1}{N!}\sum_{\mathbf{x}\in\Lambda^N}e^{-\beta H_{\Lambda}^{\g}(\mathbf{x})}.
	\label{CanPF}
\end{equation}
The Hamiltonian $H_{\Lambda}^{\g}(\mathbf{x})$ is given by
\begin{eqnarray}\label{NewH}
	H_{\Lambda}^{\g}(\mathbf{x})&:&(\mathbb{Z}^d)^N\longrightarrow\mathbb{R}\nonumber\\
	&&\;\;\;\mathbf{x}\;\;\;\;\;\mapsto \sum_{\substack{1\le i <j\le N}}V(x_i-x_j)+\sum_{\substack{1\le i\le N,\;j\ge1}}V(x_i-\gamma_j),
\end{eqnarray}
where $\g=(\gamma_1,..,\gamma_i,...)$ is an appropriate fixed configuration  with $\gamma_i\in\Lambda^c$, for all $i\ge1$. In order to simplify the calculation we consider zero boundary conditions such that our Hamiltonian is given by $H^{\mathbf{0}}_{\Lambda}(\mathbf{x})=\sum_{1\le i<j\le N}V(x_i-x_j)$. 

The potential defined in \eqref{Potential} satisfies the usual regularity and stability conditions needed for the cluster expansion.  Indeed, for all fixed $x^*\in\mathbb{Z}^d$ we get
\begin{equation}
	\sum_{1\le j\le N}V(x^*-x_j)\ge-4J\sum_{1\le j\le N}\mathbf{1}_{\{|x^*-x_j|=1\}}(x_j)\ge-8Jd=:-B,
	\label{Stability}
\end{equation}
and
\begin{equation}
	\sum_{x\in\mathbb{Z}^d} \left| e^{-\beta V(x^*-x)}-1\right|=\sum_{\substack{x\in\mathbb{Z}^d\;:\\|x^*-x|\le 1}}\left| e^{-\beta V(x^*-x)}-1\right|=: C_{J,d}(\beta),
\end{equation}
where
\begin{equation}
	C_{J,d}(\beta)=2d(e^{4\beta J}-1)+1 < \infty
	\label{TildeC}
\end{equation}
for all finite $\beta\ge0$.

Defining the {\it{finite volume free energy}} as
\begin{equation}
	f_{\beta,\Lambda,\z}(N):=-\frac{1}{\beta|\Lambda|}\log Z_{\Lambda,\beta}^{\z}(N),
	\label{FVfreeE}
\end{equation}
the {\it{thermodynamic free energy}} is given by
\begin{equation}\begin{split}
		f_{\beta}(\rho):=\lim_{\substack{\Lambda\to\mathbb{Z}^d\\N/|\Lambda|\to\rho}}f_{\beta,\Lambda,\z}(N),
	\end{split}
	\label{FreeE}
\end{equation}
with $\rho>0$ and where the limit is in the sense of Van Hove.

The main result of this paper is the cluster expansion of \eqref{CanPF} presented in Theorem \ref{TH1} below. Thanks to it we also derive an expression for the thermodynamic free energy as an absolutely convergent power series with respect to the density. The coefficients of this expansion are given by the {\it{irreducible (2-connected) Mayer's coefficients}}.  These are defined as 
\begin{equation}
	\beta_n:=\frac{1}{n!}\sum_{\substack{g\in\mathcal{B}_{n+1}\\V(g)\ni\{1\}}}\sum_{(x_2,\ldots,x_{n+1})\in(\mathbb{Z}^d)^n}\prod_{\{i,j\}\in E(g)}(e^{-\beta V(x_i-x_j)}-1),\;\;\;x_1\equiv 0,
	\label{Mayer}
\end{equation}where $\mathcal{B}_{n+1}$ is the set of the graphs with $n+1$ vertices which remain connected when a vertex is removed (called also 2-connected), and $E(g)$ and $V(g)$ are respectively the set of edges and vertices of a graph $g$. Note that these are the \textquotedblleft discrete version\textquotedblright\space of the ones given in \cite{mayerstatistical} - formula (13.25) - where instead of the sum over  $\mathbf{x}\in(\mathbb{Z}^d)^n$ one has the integral over   $(\mathbb{R}^d)^n$.

\begin{theorem}
	There exists a constant $\mathcal{R}_C\equiv\mathcal{R}_C(d,J,\beta)$ independent on $N$ and $\Lambda$ (see Lemma \ref{LemmaCluster} for the precise value), such that if $N/|\Lambda|<\mathcal{R}_C$ then 
	\begin{equation}
		\frac{1}{|\Lambda|}\log Z^{\z}_{\Lambda,\beta}(N)=\frac{1}{|\Lambda|}\log\frac{|\Lambda|^N}{N!}+\frac{N}{|\Lambda|}\sum_{n\ge1}F_{\beta,N,\Lambda}(n),\;\;\;\;
		\label{CeCan}
	\end{equation}
	where $F_{\beta,N,\Lambda}(n)$ is explicitly given \eqref{F_bNL}.
	For this function there exist  constants $C,c>0$ such that for every $N$ and $\Lambda$ and for all $n\ge1$:
	\begin{equation}
		|F_{\beta,N,\Lambda}(n)|\le Ce^{-cn}.
		\label{exponential_decay}
	\end{equation}
	Furthermore, in the thermodynamic limit
	\begin{equation}
		\lim_{\substack{\Lambda\to\mathbb{Z}^d\\N/|\Lambda|\to\rho}}\frac{N}{|\Lambda|}F_{\beta,N,\Lambda}(n)=\frac{1}{n+1}\rho^{n+1}\beta_n,
	\end{equation} 
	for all $n\ge1,\rho<\mathcal{R}_C$ and where $\beta_n$ is given by \eqref{Mayer}. 
	\label{TH1}
\end{theorem}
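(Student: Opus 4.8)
The plan is to follow the strategy of \cite{pulvirenti2012cluster}: first rewrite \eqref{CanPF} as a constrained sum over cluster-size occupation numbers, and then recognize the canonical constraint $\sum_x \eta(x)=N$ as a hard-core exclusion among clusters, so that the abstract polymer machinery of \cite{fernandez2007cluster,kotecky1986cluster,gruber1971general} becomes applicable. Concretely, I would first Mayer-expand the Boltzmann weight, writing $e^{-\beta H^{\z}_{\Lambda}(\mathbf{x})}=\prod_{1\le i<j\le N}\bigl(1+(e^{-\beta V(x_i-x_j)}-1)\bigr)$ and expanding the product over the edge sets of graphs on $[N]=\{1,\dots,N\}$. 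Grouping each graph into its connected components and summing the positions $\mathbf{x}\in\Lambda^N$, the contribution of a component depends only on its number of vertices $k$; call it $C_k$, with $C_1=|\Lambda|$ and, for $k\ge2$, $C_k$ the position sum of $\prod_{\{i,j\}}(e^{-\beta V(x_i-x_j)}-1)$ over connected graphs on $k$ labelled vertices. Collecting partitions of $[N]$ by the number $m_k$ of blocks of each size, the multinomial count $N!/\prod_k (k!)^{m_k}m_k!$ cancels the $1/N!$ and yields
\begin{equation*}
	Z^{\z}_{\Lambda,\beta}(N)=\sum_{\substack{(m_k)_{k\ge1}\\ \sum_k k\,m_k=N}}\ \prod_{k\ge1}\frac{1}{m_k!}\left(\frac{C_k}{k!}\right)^{m_k}.
\end{equation*}

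Next I would isolate the $k=1$ (free particle) term. Setting $M=\sum_{k\ge2}k\,m_k$ so that $m_1=N-M$, and using $C_1=|\Lambda|$, factoring out the ideal-gas weight $|\Lambda|^N/N!$ produces a falling factorial $N!/(N-M)!$ which couples the remaining clusters. The crucial observation, as in \cite{pulvirenti2012cluster}, is that this falling factorial is exactly the partition function of $M$ objects placed on $N$ distinct ``slots'', i.e.\ a hard-core exclusion: each cluster of size $k$ occupies $k$ of the $N$ slots, and distinct clusters must occupy disjoint slots. This recasts $Z^{\z}_{\Lambda,\beta}(N)\,N!/|\Lambda|^N$ as an abstract polymer partition function whose polymers are the clusters (connected graphs of particles in $\Lambda$), with activity of order $C_k/(k!\,|\Lambda|^k)$ and incompatibility relation given by slot-overlap.

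Then I would apply the abstract polymer cluster expansion. The activities are controlled by the stability bound \eqref{Stability} and the regularity constant \eqref{TildeC}: a Penrose/tree-graph bound shows that the connected-graph sum defining $C_k$ is bounded by $(\text{const})^{k}k^{k-2}|\Lambda|$ times powers of $C_{J,d}(\beta)$ and $e^{\beta B}$, so that the effective polymer activity of a size-$k$ cluster is of order $(N/|\Lambda|)^{k-1}$ per unit volume. The Kotecky--Preiss (or Fernandez--Procacci) criterion then holds precisely when $N/|\Lambda|$ lies below a threshold $\mathcal{R}_C(d,J,\beta)$, whose explicit value is recorded in Lemma \ref{LemmaCluster}. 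Convergence produces \eqref{CeCan}, with $F_{\beta,N,\Lambda}(n)$ the sum over clusters (of polymers) carrying $n+1$ particles; the geometric decay \eqref{exponential_decay} is the standard output of a convergent polymer expansion, the rate $c$ being governed by the gap between $N/|\Lambda|$ and $\mathcal{R}_C$.

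Finally, for the thermodynamic limit I would take $\Lambda\to\mathbb{Z}^d$ with $N/|\Lambda|\to\rho$. Since $N!/(N-M)!\,|\Lambda|^{-M}\to\rho^{M}$ for fixed $M$, and a cluster carrying $n+1$ particles contributes $n$ free lattice summations (one root pinned at the origin by translation invariance), the volume-normalized term $\frac{N}{|\Lambda|}F_{\beta,N,\Lambda}(n)$ converges to $\rho^{n+1}$ times a translation-invariant lattice sum over connected graphs. I expect the main obstacle to be showing that, after the resummation implicit in the cluster-of-clusters expansion, this lattice sum collapses to the irreducible (2-connected) coefficient $\beta_n$ of \eqref{Mayer} with the exact prefactor $\frac{1}{n+1}$ --- this is the discrete analogue of the classical reduction from connected to 2-connected graphs in the density (virial) expansion of the free energy, and it is where the combinatorics must be carried out carefully, e.g.\ via a Lagrange-type inversion relating the cluster and irreducible coefficients.
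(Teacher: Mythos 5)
Your proposal follows essentially the same route as the paper: Mayer expansion of the Boltzmann weight, reorganization into an abstract polymer model whose polymers are the (labelled) vertex sets of connected components with hard-core (disjointness) compatibility, a tree-graph bound on the polymer activities using the stability and regularity constants, a convergence criterion yielding a density threshold $\mathcal{R}_C$, and finally the reduction to the irreducible coefficients $\beta_n$ in the thermodynamic limit, which both you and the paper delegate to the combinatorial machinery of \cite{pulvirenti2012cluster}. The only substantive difference is that the paper obtains its explicit $\mathcal{R}_C$ via the Procacci--Yuhjtman tree-graph inequality combined with the Morais--Procacci criterion rather than a Penrose bound with Kotecký--Preiss/Fernández--Procacci, which changes only the numerical value of the threshold and not the validity of the argument.
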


\begin{remark} 
	For the original formulation of the Ising model (as it is given by partition function \eqref{CpfIs}), the theorem above corresponds to the expansion around magnetization $m=-1$. This results from \eqref{transf}-\eqref{NewH} and from the fact that $N$ is defined via the magnetization ($N=m'|\Lambda|$). Moreover, by symmetry, we can consider a similar expansion around $m=1$ applying $\sigma(x)=1-2\eta(x).$
\end{remark}
\begin{remark}\label{Remark2}
	We recall that in the literature the cluster expansion for the Ising model - in the gran-canonical ensemble - is done by using a polymer model representation usually called {\it{contour expansion}} \cite{friedli2017statistical}. In Section \ref{SectionRD} we will recall it and we compare this expansion with the grand-canonical version of the one given in Theorem \ref{TH1} (for which a first non-rigorous formulation is given in \cite{farrell1966cluster}). Moreover, again in Section \ref{SectionRD}, we will use the latter to derive the {\it{virial inversion}} and we compare the lower bound of its radius of convergence with the one given in the theorem above.
\end{remark}	
\begin{remark}\label{Remark3}
	As it is explained in Subsection \ref{RemarKacBC}, Theorem \ref{TH1} holds true if we assume that the potential \eqref{Potential} acts when $|x-x'|\le R$ with $R^d<|\Lambda|$ (for example in the case of Kac potential) as well as if we consider 
	boundary conditions $\g\ne\z$. 
\end{remark}

As we will see in  Section \ref{SectionCE} the term $F_{\beta,N,\Lambda}(n)$ is given by
\begin{equation}
	F_{\beta,N,\Lambda}(n)=\frac{1}{n+1}P_{N,|\Lambda|}(n)B_{\Lambda,\beta}(n),
	\label{F_bNL}
\end{equation}
where
\begin{equation}
	P_{N,|\Lambda|}(n):=\begin{cases}
		\frac{(N-1)\cdot\cdot\cdot(N-n)}{|\Lambda|^n}\;\;\;{\rm{if}}\;n<N,\\
		0\;\;\;\;\;\;\;\;\;\;\;\;\;\;\;\;\;\;\;\;\;\;{\rm{otherwise}},
	\end{cases}
	\label{P_NL}
\end{equation}
and $B_{\Lambda,\beta}(n)$ will be defined in \eqref{cluster-coeff} and is such that $B_{\Lambda,\beta}(n)\to\beta_n$ as $\Lambda\to\mathbb{Z}^d$. 

Let $\mathcal{P}_{n+1,|\Lambda|}(\rho)$ be a polynomial of degree n+1 in $\rho$ defined as follows:
\begin{equation}
	\mathcal{P}_{n+1,|\Lambda|}(\rho):=\begin{cases}
		\rho\left(\rho-\frac{1}{|\Lambda|}\right)\cdots\left(\rho-\frac{n}{|\Lambda|}\right)\;\;\;{\rm{if}}\;\frac{n}{|\Lambda|}\le\rho,\\	
		0\;\;\;{\rm{otherwise}}.
	\end{cases}	
	\label{P_R-P_NL}
\end{equation}
Let us note that, when $\rho=\rho_{\Lambda}:=N/|\Lambda|$ we get
\begin{equation}
	\mathcal{P}_{n+1,|\Lambda|}(\rho_{\Lambda})=\frac{N}{|\Lambda|}P_{N,|\Lambda|}(n).
\end{equation}
Then, defining
\begin{equation} 
	\F(\rho):=\frac{1}{\beta}\left\{\rho(\log\rho-1)-\sum_{n\ge1}\frac{1}{n+1}\mathcal{P}_{n+1,|\Lambda|}(\rho)B_{\Lambda,\beta}(n) \right\}
	\label{ModFreeE}
\end{equation}
with $\rho\in[0,\mathcal{R}_C)$, and using Stirling's approximation we get:
\begin{equation}
	f_{\Lambda,\beta,\z}(N)=\F(\rho_{\Lambda})+S_{|\Lambda|}(\rho_{\Lambda}),
	\label{FeRho}
\end{equation} 
where $S_{|\Lambda|}(\rho_{\Lambda})$ is an error term of order $\log{\sqrt{|\Lambda|}}/|\Lambda|$ (see Appendix B of \cite{scola2020local}). Hence, 
\begin{eqnarray}
	f_{\beta}(\rho)=\lim_{\substack{\Lambda\to\mathbb{Z}^d\\N/|\Lambda|\to\rho}}f_{\Lambda,\beta,\z}(N)&=&\lim_{\substack{\Lambda\to\mathbb{Z}^d\\\rho_{\Lambda}\to\rho}}\F(\rho_{\Lambda})+S_{|\Lambda|}(\rho_{\Lambda})\nonumber\\
	&=&\rho(\log\rho-1)-\sum_{n\ge1}\frac{\beta_n}{n+1}\rho^{n+1}.
	\label{TFE_CE}
\end{eqnarray}

Furthermore, thanks to the validity of the cluster expansion we analyze the behavior of the truncated 2-point (\textquotedblleft canonical'') correlation function. Given $q_1,q_2\in\Lambda$ we define:
\begin{equation}
	u^{(2)}_{\Lambda,N}(q_1,q_2):=\rho^{(2)}_{\Lambda,N}(q_1,q_2)-\rho^{(1)}_{\Lambda,N}(q_1)\rho^{(1)}_{\Lambda,N}(q_2),
	\label{U2}
\end{equation}
where
\begin{equation}
	\rho^{(1)}_{\Lambda,N}(q):=\frac{1}{(N-1)!}\sum_{\mathbf{x}\in\Lambda^{N-1}}\frac{1}{Z^{per}_{\Lambda,\beta}(N)}e^{-\beta H^{per}_{\Lambda}(q,\mathbf{x})},
	\label{R1}
\end{equation}
and
\begin{equation}
	\rho^{(2)}_{\Lambda,N}(q_1,q_2):=\frac{1}{(N-2)!}\sum_{\mathbf{x}\in\Lambda^{N-2}}\frac{1}{Z^{per}_{\Lambda,\beta}(N)}e^{-\beta H^{per}_{\Lambda}(q_1,q_2,\mathbf{x})}
	\label{R2}
\end{equation}
where $Z^{per}_{\Lambda,\beta}(N)$ is given by \eqref{CanPF} with the difference that, for simplicity, we consider here periodic boundary conditions. We have: 
\begin{theorem} Let $q_1,q_2$ be two fixed points in the domain $\Lambda$, then there exist positive constants $C$ and $C_1$, independent of $N$ and $\Lambda$, such that, when $N/|\Lambda|$ is small enough, we have

	\begin{eqnarray}	
		|u^{(2)}(q_1,q_2)|&\le&\left(\frac{N}{|\Lambda|}\right)^2\bigg[(e^{4\beta J}-1)\mathbf{1}_{\{|q_1-q_2|=1\}}+\mathbf{1}_{\{q_1=q_2\}}\nonumber
		\\
		&+&\frac{(e^{4\beta J}-1)\mathbf{1}_{\{|q_1-q_2|=1\}}+\mathbf{1}_{\{q_1=q_2\}}}{N}+C e^{-|q_1-q_2|}\bigg]+ C_1\frac{1}{|\Lambda|}.\;\;\;\;\;\;\;\;\;\; 
		\label{2Point}
	\end{eqnarray}
	\label{TH2}
\end{theorem}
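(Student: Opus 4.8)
The plan is to express the truncated correlation $u^{(2)}(q_1,q_2)$ directly through the cluster expansion established in Theorem \ref{TH1}, so that it appears as the density-weighted sum of precisely those clusters that are anchored at, and connect, both points $q_1$ and $q_2$. Concretely, I would first rewrite the one- and two-point functions \eqref{R1}--\eqref{R2} as ratios of rooted partition functions, pinning one (respectively two) particle(s) at the prescribed sites while the remaining $N-1$ (respectively $N-2$) particles range over $\Lambda$. Applying to numerator and denominator the same Mayer-graph resummation that produces \eqref{CeCan}, the disconnected contributions to $\rho^{(2)}_{\Lambda,N}(q_1,q_2)$ cancel exactly against the product $\rho^{(1)}_{\Lambda,N}(q_1)\rho^{(1)}_{\Lambda,N}(q_2)$, which is the defining property of the truncated correlation. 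What survives is a sum over connected clusters linking the two roots.

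The leading contribution is the single Mayer bond. The Mayer function $f(x):=e^{-\beta V(x)}-1$ attached to the hard-core potential \eqref{Potential} satisfies $f(0)=-1$, $f(x)=e^{4\beta J}-1$ for $|x|=1$, and $f(x)=0$ otherwise, so the shortest connecting cluster contributes exactly the term proportional to $f(q_1-q_2)$. The natural combinatorial prefactor for two pinned particles is $N(N-1)/|\Lambda|^2$, and after passing through the one-point expansion $\rho^{(1)}_{\Lambda,N}(q_i)=N/|\Lambda|+O((N/|\Lambda|)^2)$ and taking absolute values one recovers the leading terms $(N/|\Lambda|)^2\big[(e^{4\beta J}-1)\mathbf{1}_{\{|q_1-q_2|=1\}}+\mathbf{1}_{\{q_1=q_2\}}\big]$ of \eqref{2Point}.

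It then remains to control three corrections. The $1/N$ term is purely combinatorial: since $N(N-1)/|\Lambda|^2=(N/|\Lambda|)^2\,(1-1/N)$, replacing the exact prefactor by $(N/|\Lambda|)^2$ leaves precisely a relative factor $1/N$ multiplying the same Mayer weight, which I would track explicitly through the polynomials $P_{N,|\Lambda|}(n)$ of \eqref{P_NL}. The exponentially decaying term $C\,e^{-|q_1-q_2|}$ comes from the longer connected clusters: because every Mayer bond has range one, any cluster joining $q_1$ to $q_2$ contains at least $|q_1-q_2|$ bonds, so the weight bound \eqref{exponential_decay} combined with the standard entropy estimate on the number of such clusters yields geometric decay in the separation. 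Finally, the uniform $C_1/|\Lambda|$ term is the genuinely canonical correction, bounded uniformly in the positions by the absolutely convergent, exponentially decaying coefficients $B_{\Lambda,\beta}(n)$, provided $N/|\Lambda|<\mathcal{R}_C$.

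The hard part will be this last step, disentangling the canonical constraint. Unlike the grand-canonical ensemble, where the connected clusters factorize and the pair correlation is a clean sum of rooted diagrams, here the fixed particle number couples every cluster through the polynomial weights $P_{N,|\Lambda|}(n)$ and $\mathcal{P}_{n+1,|\Lambda|}(\rho)$ of \eqref{P_NL} and \eqref{P_R-P_NL}. The technical core is to show that this global coupling contributes only at the stated orders $O(1/N)$ and $O(1/|\Lambda|)$ beyond the leading Mayer term. I would achieve this by expanding the ratios of constrained partition functions, isolating the explicit volume-dependence of the polynomial prefactors, and bounding the remainder with the exponentially decaying coefficients furnished by Theorem \ref{TH1}.
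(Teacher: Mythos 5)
Your overall strategy is the right one and largely parallels the paper's proof, but there is one conceptual step that, as stated, would fail. You claim that after the Mayer resummation ``the disconnected contributions to $\rho^{(2)}_{\Lambda,N}(q_1,q_2)$ cancel exactly against the product $\rho^{(1)}_{\Lambda,N}(q_1)\rho^{(1)}_{\Lambda,N}(q_2)$'' so that only clusters connecting the two roots survive. That exact cancellation is a grand-canonical fact; in the canonical ensemble it is false, precisely because the polynomial prefactors $P_{N,|\Lambda|}(n)$ do not factorize over disjoint clusters (one has $P_{N,|\Lambda|}(n_1+n_2)\ne P_{N,|\Lambda|}(n_1)P_{N,|\Lambda|}(n_2)$). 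The residual left over from this imperfect cancellation is not negligible relative to the statement you are proving: it is exactly the source of the position-independent term $C_1/|\Lambda|$ in \eqref{2Point}, which is forced on you by the canonical sum rule $\sum_{q_2\in\Lambda}u^{(2)}_{\Lambda,N}(q_1,q_2)=-\rho^{(1)}_{\Lambda,N}(q_1)$ (no constant sign, no decay to zero). You do acknowledge later that the canonical coupling must be shown to contribute only at order $1/|\Lambda|$, but attributing that contribution to the coefficients $B_{\Lambda,\beta}(n)$ is not the correct mechanism, and ``expanding the ratios of constrained partition functions'' is exactly the step that needs a quantitative lemma rather than a remark. In the paper this is handled by the decomposition $\tilde{B}_{\Lambda,\beta}(n,m,k)=\bar{B}_{\Lambda,\beta}(n,k)\delta_{n,m}+R_{\Lambda,\beta}(n,m,k)$ of \eqref{Correlation4}, where $\bar{B}$ collects the articulation-free graphs rooted at the two white vertices and the remainder satisfies the uniform bound $|R_{\Lambda,\beta}(n,m,k)|\le C/|\Lambda|$ of \eqref{BoundErr}, imported from \cite{pulvirenti2015finite}; without an analogue of that estimate your argument does not close.

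On the machinery: the paper does not expand the ratios \eqref{R1}--\eqref{R2} directly but works with the Bogoliubov functional and the extended partition function $Z^{per}_{\Lambda,\beta,N}(\alpha\varphi)$, extracting $u^{(2)}$ as a second $\alpha$-derivative of $\log Z^{per}_{\Lambda,\beta,N}(\alpha\varphi)$ via a two-colour polymer model (pairs $(V,A)$ with $A$ the set of ``white'' vertices carrying $\varphi$), then takes Kronecker deltas at $q_1,q_2$ as test functions. Your direct rooted-cluster route could in principle be made to work, but the generating-functional formulation is what makes the separation into the articulation-free part and the $O(1/|\Lambda|)$ remainder clean. Your remaining estimates are sound and match the paper's: the leading term is the single bond $f_{1,2}$ with prefactor $N(N-1)/|\Lambda|^2$, which accounts for both the $(N/|\Lambda|)^2$ terms and the $1/N$ correction; and the term $Ce^{-|q_1-q_2|}$ follows from $|q_1-q_2|\le k$ for a connected cluster with $k$ extra vertices together with exponential decay in $k$ --- though note that the decay rate in the analogue of Lemma \ref{Lemma4.1} must be strictly larger than $1$ for the trick of multiplying and dividing by $e^{|q_1-q_2|}$ to leave a summable series, a point your ``standard entropy estimate'' glosses over.
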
 
The proof of the theorem will be given in Section \ref{SectionD}.

\subsection{Application of Theorem \ref{TH1} to Precise Large and Local Moderate Deviations Theorems} \label{SubSec3}

In the last part of the paper, using the validity of the cluster expansion in the canonical ensemble (Theorem \ref{TH1}),
we prove theorems on local moderate and precise large deviations. For that we follow the strategy in \cite{scola2020local} since the calculations done in $\mathbb{R}^d$ can be applied in $\mathbb{Z}^d$.
However, we present the particular case of $\mathbb{Z}^d$ because we want to compare with analogous results presented in \cite{del1974local} and \cite{dobrushin1994large} but using a different expansion. This comparison will be done in Section \ref{SectionLMD}.

%

Fixing a {\it{chemical potential}} $\mu_0$ the {\it{grand-canonical probability measure}} at finite volume with zero boundary condition is defined as follows:
\begin{equation}
	\mathbb{P}^{\z}_{\Lambda,\mu_0}(\mathbf{x}):=\bigotimes_{N\ge0}\frac{1}{\Xi^{\z}_{\Lambda,\beta}(\mu_0)}\frac{e^{\beta \mu_0N}e^{-\beta H^{\z}_{\Lambda}(\mathbf{x})}}{N!}.
	\label{Prob1}
\end{equation}
where $\Xi^{\z}_{\Lambda,\beta}(\mu_0)$ is the {\it{grand-canonical partition function}} with zero boundary condition given by:
\begin{equation}
	\Xi^{\z}_{\Lambda,\beta}(\mu_0):=\sum_{N\ge 0}e^{\beta\mu_0 N}Z^{\z}_{\Lambda,\beta}(N),
	\label{GcPF1}
\end{equation}  
and where we used \eqref{CanPF}.


Hence, using the results presented in Theorem \ref{TH1} we can study directly the probability of the set
\begin{equation}
	A_N:=\{\mathbf{x}\equiv\{x_i\}_{i\ge1},\;x_i\in\mathbb{Z}^d\;|\;|\mathbf{x}\cap\Lambda|=N\},
	\label{DeviationSet}
\end{equation}
for $N$ taking the value
$\tN$ being a general deviation from the mean value $\bN$ of order $\alpha\in[1/2,1]$, i.e,
\begin{equation}
	\tN:=\bN+u|\Lambda|^{\alpha},\;\;\;\alpha\in[1/2,1],\;u\in\mathbb{R}^+
	\label{Deviation}
\end{equation}
where
\begin{equation}
	\bRL:=\mathbb{E}^{\z}_{\Lambda,\mu_0}\left[\frac{N}{|\Lambda|}\right]=p'_{\Lambda,\beta,\z}(\mu_0),\;\;\;\bN:=\lfloor\bRL|\Lambda|\rfloor.
	\label{MeanV}
\end{equation} 

We have that
\begin{equation}
	\mathbb{P}_{\Lambda,\mu_0}^{\z}(A_{\tN})=\frac{e^{\beta\mu_0\tN}\canBC(\tN)}{\GcanBC(\mu_0)},
	\label{ProbA}
\end{equation}
and the key point is that we can now compute it using Theorem~\ref{TH1}.
For that purpose, we also define $N^*$ as the number of particles such that 
\begin{equation}
	\sup_{N}\left\{e^{\beta\mu_0N}\canBC(N)\right\}=e^{\beta\mu_0N^*}\canBC(N^*).
	\label{N*} 
\end{equation}

The previous number of particles is the one which allows to express the chemical potential $\mu_0$ using quantities on the canonical ensemble. Indeed $N^*$ has the following properties \cite{scola2020local}:   
\begin{equation}
	|\bN-N^*|\le C,
	\label{bN-N*}
\end{equation} 
with $C$ positive constant independent on $\Lambda$ and
\begin{equation}
	\mu_0=\F'(\rho^*_{\Lambda})+S'_{|\Lambda|}(\rho^*_{\Lambda}),
	\label{Mu0Rho*}
\end{equation}
where  $\rho^*_{\Lambda}=N^*/|\Lambda|$ and $S'_{|\Lambda|}(\cdot)$ has order $|\Lambda|^{-1}$ for all $\rho_{\Lambda}= N/|\Lambda|$.  With the notation $\F'(\cdot),$ $S'_{|\Lambda|}(\cdot)$ we mean the first derivative of $\F(\cdot),\;S_{|\Lambda|}(\cdot)$.

Note that from \eqref{bN-N*} we can rewrite $\tN$ given by \eqref{Deviation} as 
\begin{equation}
	\tN=N^*+u'|\Lambda|^{\alpha},
	\label{Deviation1}
\end{equation}
where $u'\sim u$.  With the notation $\sim$ we mean ``asymptotically'' as $|\Lambda|\to\infty$, where given two sequences $a_n\sim b_n\Leftrightarrow\lim_{n\to\infty}\frac{a_n}{b_n}=1$.

We have the following results.

\begin{theorem}[Precise Large Deviations]
	\label{TH-LD}
	Let $\mu_0\in\mathbb{R}$ be a chemical potential and let us fix zero boundary conditions. Let $\tilde{N}$ be a fluctuation given by \eqref{Deviation} with $\alpha=1$ such that Theorem  \ref{TH1} holds. Moreover let $\tilde{\mu}_{\Lambda}\in\mathbb{R}$ be the chemical potential such that,  $\tRL=\tN/|\Lambda|=:\mathbb{E}^{\z}_{\Lambda,\tilde{\mu}_{\Lambda}}[N/|\Lambda|]$. We have: 
	\begin{equation}
		\left|\probBC\left(A_{\tilde{N}}\right)-\frac{e^{-|\Lambda|I^{GC}_{\Lambda,\beta,\z}\left(\tRL;\bRL\right)}}{\sqrt{2\pi D_{\Lambda,\z}(\tRL^*)|\Lambda|}}\right|\le 
		\frac{Ce^{-|\Lambda|I^{GC}_{\Lambda,\beta,\z}\left(\tRL;\bRL\right)}}{|\Lambda|},
		\label{P3}
	\end{equation}
	where 
	\begin{equation}
		I^{GC}_{\Lambda,\beta,\z}\left(\tRL;\bRL\right):=\beta\left[\fgc(\tRL)-\fgc(\bRL)-\mu_0\left(\tRL-\bRL\right)\right]
		\label{LDO}
	\end{equation}
	and
	\begin{equation}
		D_{\Lambda,\z}(\tilde{\rho}^*_{\Lambda}):=\left[\beta \F''(\tilde{\rho}^*_{\Lambda})\right]^{-1}.
		\label{VAR}
	\end{equation}
	Here, with $\mathcal{F}''_{\Lambda,\beta,\z}(\cdot)$ we mean the second derivative of $\mathcal{F}_{\Lambda,\beta,\z}(\cdot)$, $\tN^*$ satisfies \eqref{N*} with $\tilde{\mu}_{\Lambda}$ instead of $\mu_0$, $\tRL^*=\tN^*/|\Lambda|$ and $\fgc(\cdot)$ is the {\it{grand-canonical free energy}} given by
	\begin{equation}
		\beta \fgc(\rho_{\Lambda}):=\sup_{\mu}\left\{\beta \rho_{\Lambda}\mu-\beta p_{\Lambda,\beta,\z}(\mu) \right\}.
		\label{GcFreeE} 
	\end{equation}
\end{theorem}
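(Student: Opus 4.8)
The plan is to combine an exact exponential-tilting identity with a sharp local central limit theorem for the tilted measure: the former produces the rate factor $e^{-|\Lambda|I^{GC}_{\Lambda,\beta,\z}(\tRL;\bRL)}$ \emph{exactly}, while the latter produces the Gaussian prefactor $(2\pi D_{\Lambda,\z}(\tRL^*)|\Lambda|)^{-1/2}$ together with the $O(|\Lambda|^{-1})$ relative error. Starting from the exact expression \eqref{ProbA}, I introduce the tilted chemical potential $\tilde{\mu}_{\Lambda}$ defined by $\tRL=p'_{\Lambda,\beta,\z}(\tilde{\mu}_{\Lambda})=\mathbb{E}^{\z}_{\Lambda,\tilde{\mu}_{\Lambda}}[N/|\Lambda|]$, so that $\tN$ becomes the typical value under $\mathbb{P}^{\z}_{\Lambda,\tilde{\mu}_{\Lambda}}$, and write the change-of-measure identity
\[
	\probBC(A_{\tN})=\frac{\GcanBC(\tilde{\mu}_{\Lambda})}{\GcanBC(\mu_0)}\,e^{\beta(\mu_0-\tilde{\mu}_{\Lambda})\tN}\,\mathbb{P}^{\z}_{\Lambda,\tilde{\mu}_{\Lambda}}(A_{\tN}).
\]

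Next I identify the deterministic prefactor with the rate function. Since $\beta|\Lambda|p_{\Lambda,\beta,\z}(\mu)=\log\GcanBC(\mu)$, and since $\bRL=p'_{\Lambda,\beta,\z}(\mu_0)$ and $\tRL=p'_{\Lambda,\beta,\z}(\tilde{\mu}_{\Lambda})$ realize the suprema in \eqref{GcFreeE}, Legendre duality yields the two exact finite-volume identities $\beta\fgc(\bRL)=\beta\bRL\mu_0-\beta p_{\Lambda,\beta,\z}(\mu_0)$ and $\beta\fgc(\tRL)=\beta\tRL\tilde{\mu}_{\Lambda}-\beta p_{\Lambda,\beta,\z}(\tilde{\mu}_{\Lambda})$. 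Substituting these into \eqref{LDO} and using $\tN=\tRL|\Lambda|$, a direct computation shows that the deterministic prefactor above equals $e^{-|\Lambda|I^{GC}_{\Lambda,\beta,\z}(\tRL;\bRL)}$ identically, with no approximation. It therefore only remains to control $\mathbb{P}^{\z}_{\Lambda,\tilde{\mu}_{\Lambda}}(A_{\tN})$.

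This last factor is where the cluster expansion does the work. By the choice of $\tilde{\mu}_{\Lambda}$, the point $\tN$ lies within $O(1)$ of the mode $\tN^*$ of $\mathbb{P}^{\z}_{\Lambda,\tilde{\mu}_{\Lambda}}$ (the analogue of \eqref{bN-N*} for $\tilde{\mu}_{\Lambda}$), so $\mathbb{P}^{\z}_{\Lambda,\tilde{\mu}_{\Lambda}}(A_{\tN})$ is a local-CLT value at the typical density. Theorem \ref{TH1}, through the representation \eqref{FeRho}, supplies $f_{\Lambda,\beta,\z}(N)=\F(\rho_{\Lambda})+S_{|\Lambda|}(\rho_{\Lambda})$ with $\F$ analytic, strictly convex and with derivatives bounded uniformly in $\Lambda$ on $[0,\mathcal{R}_C)$, these properties following from the exponential decay \eqref{exponential_decay} of the cluster terms. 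Writing $a_N:=e^{\beta\tilde{\mu}_{\Lambda}N}\canBC(N)$, I Taylor-expand $\log a_N$ around $\tN^*$ to second order, replace the sum $\GcanBC(\tilde{\mu}_{\Lambda})=\sum_N a_N$ by the corresponding Gaussian integral via Euler--Maclaurin, and obtain
\[
	\mathbb{P}^{\z}_{\Lambda,\tilde{\mu}_{\Lambda}}(A_{\tN})=\frac{1}{\sqrt{2\pi D_{\Lambda,\z}(\tRL^*)|\Lambda|}}\Bigl(1+O(|\Lambda|^{-1})\Bigr),
\]
where the variance $\var_{\tilde{\mu}_{\Lambda}}(N)=\beta^{-1}|\Lambda|\,p''_{\Lambda,\beta,\z}(\tilde{\mu}_{\Lambda})$ equals $|\Lambda|D_{\Lambda,\z}(\tRL^*)$ by the reciprocity $\F''(\tRL^*)\,p''_{\Lambda,\beta,\z}(\tilde{\mu}_{\Lambda})=1$ dual to \eqref{VAR}.

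The main obstacle is precisely this sharp local CLT with its $O(|\Lambda|^{-1})$ \emph{relative} error. The delicate point is the bookkeeping of the Stirling remainder $S_{|\Lambda|}$, of order $\log\sqrt{|\Lambda|}/|\Lambda|$: its leading part is not an error but supplies exactly the factor $\sqrt{|\Lambda|}$ in the denominator (it is the $\sqrt{2\pi N}$ of $N!\sim\sqrt{2\pi N}(N/e)^N$), whereas its subleading part, the third-order term in the Laplace expansion, and the Euler--Maclaurin correction must all be shown to enter only at relative order $|\Lambda|^{-1}$. The uniform derivative bounds and the non-degeneracy $\F''>0$ coming from \eqref{exponential_decay} are exactly what makes the saddle non-degenerate and these remainders genuinely lower order. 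Since the deterministic prefactor is exact, multiplying through converts the relative $O(|\Lambda|^{-1})$ into the stated absolute bound $\tfrac{C}{|\Lambda|}e^{-|\Lambda|I^{GC}_{\Lambda,\beta,\z}(\tRL;\bRL)}$. This is the scheme of \cite{scola2020local}, which transfers to $\mathbb{Z}^d$ because Theorem \ref{TH1} provides there the same analytic control used over $\mathbb{R}^d$.
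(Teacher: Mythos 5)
Your proposal is correct and follows essentially the same route as the paper: the exact exponential-tilting identity \eqref{option1}, the Legendre-duality computation showing the prefactor is exactly $e^{-|\Lambda|I^{GC}_{\Lambda,\beta,\z}(\tRL;\bRL)}$, and then a local CLT for $\mathbb{P}^{\z}_{\Lambda,\tilde{\mu}_{\Lambda}}(A_{\tN})$ at the mode $\tN^*$, which the paper organizes as the product $J^{C}_{\tilde{\mu}_{\Lambda}}(\tN,\tN^*)K(\tilde{\mu}_{\Lambda},\tN^*)$ controlled via the cluster-expansion form of $\F$ and Lemma 6.3 of \cite{scola2020local}. The only discrepancy is that you claim a relative error $O(|\Lambda|^{-1})$ in the local CLT, whereas the paper only establishes (and only needs) $O(|\Lambda|^{-1/2})$, since the extra factor $|\Lambda|^{-1/2}$ from the Gaussian prefactor already yields the stated absolute bound $C e^{-|\Lambda|I^{GC}_{\Lambda,\beta,\z}(\tRL;\bRL)}/|\Lambda|$.
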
	
In \eqref{GcFreeE} we used the {\it{finite volume pressure}} defined as follows:
\begin{equation}\label{FVLattice}
	\beta p_{\Lambda,\beta,\z}(\mu):=	\frac{1}{|\Lambda|}\log \Xi^{\z}_{\Lambda,\beta}(\mu),
\end{equation}	
where the grand-canonical partition function $\Xi^{\z}_{\Lambda,\beta}(\mu)$ is given by \eqref{GcPF1}.

\begin{remark} \label{RemarkF_GC}
	Note that: $\beta \fgc(\bRL)=\beta \bRL\mu_0-\beta p_{\Lambda,\beta,\z}(\mu_0)$, $(\fgc)'(\bRL)=\mu_0$ and $\beta \fgc(\tRL)=\beta \tRL\tilde{\mu}_{\Lambda}-\beta p_{\Lambda,\beta,\z}(\tilde{\mu}_{\Lambda})$.  
\end{remark}

\begin{theorem}[Local Moderate Deviations] 	\label{Th2}
	\label{TH-MD}
	Let $\mu_0\in\mathbb{R}$ be a chemical potential and let us fix zero boundary conditions. Let $N^*$ be as in \eqref{N*} such that Theorem \ref{TH1} holds and let us call $\rho^*_{\Lambda}=N^*/|\Lambda|$.
	For $\tilde N$ and the set $A_{\tilde N}$ respectively given by \eqref{Deviation1} and
	\eqref{DeviationSet} with $\alpha\in[1/2,1)$, we have:
	\begin{equation}\label{formulamoderate}
		\left|\probBC(A_{\tN})-\frac{\exp\left\{-\frac{(u')^2|\Lambda|^{2\alpha-1}}{2D^{\alpha}_{\Lambda,\z}(\rho^*_{\Lambda})}\right\}}{\sqrt{2\pi D^{\alpha,+}_{\Lambda,\z}(\rho^*_{\Lambda})|\Lambda|}}\right|\le \frac{2e^{-\frac{(u')^2|\Lambda|^{2\alpha-1}}{2D^{\alpha}_{\Lambda,\z}(\rho^*_{\Lambda})}}E_{|\Lambda|}(\alpha,u',\rho^*_{\Lambda})}{\sqrt{2\pi D^{\alpha,+}_{\Lambda,\z}(\rho^*_{\Lambda})|\Lambda|}},
	\end{equation}
	where
	\begin{equation}
		D^{\alpha}_{\Lambda, \z}(\rho^*_{\Lambda}):=\left[\beta\F''(\rho^*_{\Lambda})+\beta\sum_{m=3}^{m(\alpha)-1}\frac{2(u')^{m-2}\F^{(m)}(\rho^*_{\Lambda})}{m!|\Lambda|^{(m-2)(1-\alpha)}}\right]^{-1}
		\label{Var2}
	\end{equation}
	and
	\begin{equation}
		D^{\alpha,+}_{\Lambda, \z}(\rho^*_{\Lambda}):=\left[\beta\F''(\rho^*_{\Lambda})+\beta\sum_{m=3}^{m(\alpha)-1}\frac{2(u')^{m-2}|\F^{(m)}(\rho^*_{\Lambda})|}{m!|\Lambda|^{(m-2)(1-\alpha)}}\right]^{-1}.
		\label{Var1}
	\end{equation}
	Here, $m(\alpha)$ is given by $m(\alpha):=\min\left\{m\in\mathbb{N}\;|\;m(1-\alpha)-1>0\right\}$ and $E_{|\Lambda|}(\alpha,u',\rho^*_{\Lambda})$ is an error term of order $|\Lambda|^{-[(m(\alpha)(1-\alpha)-1]}$ defined via cluster expansion as
	\begin{equation}\begin{split}
			E_{|\Lambda|}(\alpha,u',\rho^*_{\Lambda}):=\frac{\beta}{|\Lambda|^{m(\alpha)(1-\alpha)-1}}\left|\frac{(u')^{m(\alpha)}\F^{(m(\alpha))}(\rho^*_{\Lambda})}{m(\alpha)!}+\frac{u'(\mu_0-\F'(\rho^*_{\Lambda})) }{|\Lambda|^{1-m(\alpha)(1-\alpha)-\alpha}}\right.
			\\
			\left.+\sum_{m\ge m(\alpha)+1}\frac{(u')^m\F^{(m)}(\rho^*_{\Lambda})}{m!|\Lambda|^{(m-m(\alpha))(1-\alpha)}}\right|,
		\end{split}
		\label{Error}
	\end{equation}
	where $\F^{(m)}(\cdot)$ is the $m$-th derivative of $\F(\cdot)$.
\end{theorem}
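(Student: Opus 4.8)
The plan is to start from the exact identity \eqref{ProbA}, $\probBC(A_{\tN})=e^{\beta\mu_0\tN}\canBC(\tN)/\GcanBC(\mu_0)$, and to feed both the numerator and the denominator $\GcanBC(\mu_0)=\sum_{N\ge0}e^{\beta\mu_0 N}\canBC(N)$ the free-energy representation supplied by Theorem \ref{TH1}. Writing $\rho_N:=N/|\Lambda|$ and using \eqref{FeRho} we have $\canBC(N)=\exp\{-\beta|\Lambda|[\F(\rho_N)+S_{|\Lambda|}(\rho_N)]\}$, so that, after dividing numerator and denominator by the common factor $e^{\beta\mu_0 N^*}\canBC(N^*)$ (which cancels in the ratio), everything is expressed through the single profile $\phi(\rho):=\F(\rho)+S_{|\Lambda|}(\rho)-\mu_0\rho$, whose unique critical point is $\rho^*_\Lambda$. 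Indeed the defining property \eqref{N*}--\eqref{Mu0Rho*} of $N^*$ says exactly that $\mu_0=\F'(\rho^*_\Lambda)+S'_{|\Lambda|}(\rho^*_\Lambda)$, i.e. $\phi'(\rho^*_\Lambda)=0$, which is what makes the first-order term disappear in every expansion below.

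For the numerator I would Taylor-expand $\F$ around $\rho^*_\Lambda$ at the deviation point $\tRL=\rho^*_\Lambda+u'|\Lambda|^{\alpha-1}$ (this is \eqref{Deviation1} divided by $|\Lambda|$). The $m$-th term carries the weight $|\Lambda|^{\,1-m(1-\alpha)}$, and the threshold $m(\alpha)$ is precisely the first index at which this exponent turns negative. Hence the terms $m=2,\dots,m(\alpha)-1$ are non-negligible and, once resummed, are exactly the ones packaged into the effective variance $D^{\alpha}_{\Lambda,\z}(\rho^*_\Lambda)$ of \eqref{Var2}, producing the Gaussian weight $\exp\{-(u')^2|\Lambda|^{2\alpha-1}/(2D^{\alpha}_{\Lambda,\z}(\rho^*_\Lambda))\}$; the residual first-order discrepancy $\beta u'(\mu_0-\F'(\rho^*_\Lambda))|\Lambda|^\alpha$ coming from $S'_{|\Lambda|}$, together with the Taylor tail $m\ge m(\alpha)$ and the $S_{|\Lambda|}$-corrections, are precisely the three contributions assembled in $E_{|\Lambda|}(\alpha,u',\rho^*_\Lambda)$ in \eqref{Error}. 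Bounding $|e^{x}-1|\le|x|e^{|x|}$ then converts this exponentiated remainder into the multiplicative error on the right-hand side of \eqref{formulamoderate}.

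For the denominator I would write $\GcanBC(\mu_0)=e^{\beta\mu_0 N^*}\canBC(N^*)\sum_{k}\exp\{-\beta|\Lambda|[\phi(\rho^*_\Lambda+k/|\Lambda|)-\phi(\rho^*_\Lambda)]\}$ and estimate it by a local central limit / Laplace argument: since $\phi'(\rho^*_\Lambda)=0$ the leading behaviour is Gaussian with curvature $\beta\F''(\rho^*_\Lambda)$, and comparing the sum over the integer shift $k=N-N^*$ with the corresponding Gaussian integral yields the normalization $\sqrt{2\pi D^{\alpha,+}_{\Lambda,\z}(\rho^*_\Lambda)|\Lambda|}$ of \eqref{Var1}. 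The reason the absolute-value variance $D^{\alpha,+}$ appears here, rather than the signed $D^{\alpha}$, is that a one-sided upper bound on the ratio requires a uniform lower bound on this partition sum, and replacing each $\F^{(m)}(\rho^*_\Lambda)$ by $|\F^{(m)}(\rho^*_\Lambda)|$ keeps the effective curvature controlled regardless of the signs of the higher derivatives. The sum-to-integral error and the Gaussian tails beyond a window of radius $|\Lambda|^{1/2+\delta}$ are of smaller order than $|\Lambda|^{-[m(\alpha)(1-\alpha)-1]}$ and are absorbed into $E_{|\Lambda|}$; here I use that, by the absolutely convergent series \eqref{ModFreeE} for $\F$ on $[0,\mathcal{R}_C)$, all derivatives $\F^{(m)}(\rho^*_\Lambda)$ exist and are uniformly bounded, and that $S_{|\Lambda|}$ and its derivatives have the orders recorded after \eqref{FeRho} and \eqref{Mu0Rho*}.

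Combining the two estimates gives \eqref{formulamoderate}, the factor $2$ on the right merely accounting for the superposition of the numerator and denominator remainders. The main obstacle, and the step that genuinely needs the machinery of \cite{scola2020local} transported from $\mathbb{R}^d$ to $\mathbb{Z}^d$, is the uniform control of the denominator: one must show that the discrete Laplace approximation of $\sum_{k}\exp\{-\beta|\Lambda|[\phi(\rho^*_\Lambda+k/|\Lambda|)-\phi(\rho^*_\Lambda)]\}$ holds with an explicit remainder of the exact order $|\Lambda|^{-[m(\alpha)(1-\alpha)-1]}$, uniformly in the deviation parameter $u'$ across the whole range $\alpha\in[1/2,1)$, while simultaneously tracking the passage from the signed curvature in $D^{\alpha}$ (exponent) to the absolute curvature in $D^{\alpha,+}$ (prefactor). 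Once this is in place, the rest is the routine bookkeeping of the Taylor remainders already outlined.
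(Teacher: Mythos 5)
Your proposal is correct and follows essentially the same route as the paper: the splitting of $\probBC(A_{\tN})$ into the ratio $J^{C}_{\mu_0}(\tN,N^*)=e^{\beta\mu_0\tN}\canBC(\tN)/e^{\beta\mu_0 N^*}\canBC(N^*)$ and the normalization $K(\mu_0,N^*)$, the Taylor expansion of $\F$ around $\rho^*_{\Lambda}$ with the threshold $m(\alpha)$ separating the effective variance $D^{\alpha}_{\Lambda,\z}$ from the remainder $E_{|\Lambda|}$, and the discrete Laplace estimate of the normalizing sum producing $\sqrt{2\pi D^{\alpha,+}_{\Lambda,\z}|\Lambda|}$ are precisely the steps the paper carries out (delegating the two key estimates to Lemmas 6.2 and 6.3 of \cite{scola2020local}). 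Nothing essential differs.
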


\begin{cor}[Local Central Limit Theorem.]	\label{CLT}
	Under the same assumptions as in Theorem \ref{Th2} for $\alpha=1/2$ we have that
	\begin{equation}\label{Cor1}
		\left|\probBC(A_{\tN})-\frac{\exp\left\{-\frac{(u')^2}{2 D_{\Lambda,\z}(\rho^*_{\Lambda})}\right\}}{\sqrt{2\pi D_{\Lambda,\z}(\rho^*_{\Lambda})|\Lambda|}}\right|\le \frac{2 e^{-\frac{(u')^2}{2 D_{\Lambda,\z}(\rho^*_{\Lambda})}}E_{|\Lambda|}(1/2,u',\rho^*_{\Lambda})}{\sqrt{2\pi D_{\Lambda,\z}(\rho^*_{\Lambda})|\Lambda|}},
	\end{equation}
	where, using \eqref{VAR}, 
	\begin{equation}
		D_{\Lambda,\z}(\rho^*_{\Lambda})=\left[\beta\F''(\rho^*_{\Lambda})\right]^{-1}
	\end{equation}
	and $E_{|\Lambda|}(1/2,u',\rho^*_{\Lambda})$ is an error term of order $|\Lambda|^{-1/2}$ defined via cluster expansion and given by \eqref{Error}. 
\end{cor}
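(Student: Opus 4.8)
The plan is to obtain Corollary \ref{CLT} as the boundary case $\alpha=\tfrac12$ of the Local Moderate Deviations estimate \eqref{formulamoderate} in Theorem \ref{TH-MD}, so that the entire argument reduces to checking that, upon substituting $\alpha=\tfrac12$, every quantity appearing in \eqref{formulamoderate} collapses to its counterpart in \eqref{Cor1}. First I would evaluate the threshold index $m(\alpha):=\min\{m\in\mathbb{N}\;|\;m(1-\alpha)-1>0\}$ at $\alpha=\tfrac12$: the defining inequality becomes $m/2>1$, hence $m(\tfrac12)=3$. This single computation drives all the subsequent simplifications.

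With $m(\tfrac12)=3$ in hand, I would examine the two variance expressions \eqref{Var2} and \eqref{Var1}. In both, the correction sum runs over $m=3,\dots,m(\alpha)-1=2$ and is therefore empty, so that $D^{1/2}_{\Lambda,\z}(\rho^*_{\Lambda})=D^{1/2,+}_{\Lambda,\z}(\rho^*_{\Lambda})=[\beta\F''(\rho^*_{\Lambda})]^{-1}=D_{\Lambda,\z}(\rho^*_{\Lambda})$, in agreement with \eqref{VAR}. Simultaneously the Gaussian exponent $-(u')^2|\Lambda|^{2\alpha-1}/(2D^{\alpha}_{\Lambda,\z}(\rho^*_{\Lambda}))$ simplifies because $2\alpha-1=0$ forces $|\Lambda|^{2\alpha-1}=1$, so it becomes $-(u')^2/(2D_{\Lambda,\z}(\rho^*_{\Lambda}))$, which is precisely the exponent in \eqref{Cor1}. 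Substituting these identifications into both the main term and the remainder term on the two sides of \eqref{formulamoderate} then reproduces \eqref{Cor1} verbatim.

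Finally I would record the order of the remainder. By Theorem \ref{TH-MD} the term $E_{|\Lambda|}(\alpha,u',\rho^*_{\Lambda})$ defined in \eqref{Error} has order $|\Lambda|^{-[m(\alpha)(1-\alpha)-1]}$; at $\alpha=\tfrac12$ this exponent equals $3\cdot\tfrac12-1=\tfrac12$, so $E_{|\Lambda|}(\tfrac12,u',\rho^*_{\Lambda})$ is of order $|\Lambda|^{-1/2}$, as claimed (here one uses that the seemingly growing middle term of \eqref{Error} is controlled because $\mu_0-\F'(\rho^*_{\Lambda})$ is itself of order $|\Lambda|^{-1}$ by \eqref{Mu0Rho*}). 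The specialization itself is pure bookkeeping; the genuine analytic content—the saddle-point/Laplace expansion of the ratio \eqref{ProbA} around $N^*$, powered by the cluster-expansion control of $\F$ from Theorem \ref{TH1} together with the properties \eqref{bN-N*}--\eqref{Mu0Rho*} of $N^*$—already resides in the proof of Theorem \ref{TH-MD} carried out following \cite{scola2020local}. Consequently the only point requiring care here is the consistent use of the empty-sum convention and of the identity $2\alpha-1=0$, after which no further estimation is needed.
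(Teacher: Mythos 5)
Your proposal is correct and matches the paper's proof, which simply states that the corollary follows from Theorem \ref{TH-MD} at $\alpha=1/2$; you have additionally spelled out the bookkeeping ($m(1/2)=3$, the empty correction sums making $D^{1/2}_{\Lambda,\z}=D^{1/2,+}_{\Lambda,\z}=D_{\Lambda,\z}$, and $|\Lambda|^{2\alpha-1}=1$), all of which checks out.
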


For the proofs and the discussion related to the previous results we  refer to Section \ref{SectionLMD} and Appendix \ref{S2}.

\begin{remark}
	As we will see in Section \ref{SectionLMD}, thanks to Theorem \ref{TH1}, the main estimates \eqref{P3}, \eqref{formulamoderate} and \eqref{Cor1} can be obtained via a direct and explicit calculation. Moreover, the main quantities involved - \eqref{LDO}, \eqref{VAR}, \eqref{Var2},  \eqref{Var1}, \eqref{Error} - have also an explicit form in terms of sums of clusters.
\end{remark}

\section{Cluster Expansion and its convergence, proof of Theorem \ref{TH1}}
\label{SectionCE}

The proof follows closely the strategy in \cite{pulvirenti2012cluster}. 
For completeness of the presentation we repeat the main steps keeping track of the main modifications due to the lattice.
The key idea is to view the canonical partition function \eqref{CanPF} as a perturbation around the ideal case. Renormalizing with $|\Lambda|^N$ and defining  
\begin{equation}
	Z_{\Lambda,N}^{ideal}:=\frac{|\Lambda|^N}{N!}\;\;\mathrm{and}\;\;Z_{\Lambda,\beta,\z}^{int}(N):=\frac{1}{|\Lambda|^N}\sum_{\mathbf{x}\in\Lambda^N}e^{-\beta H_{\Lambda}^{\z}(\mathbf{x})},
	\label{Rewriting}
\end{equation}
we rewrite \eqref{CanPF} as
\begin{equation}
	Z^{\z}_{\Lambda,\beta}(N)=Z_{\Lambda,N}^{ideal}Z_{\Lambda,\beta,\z}^{int}(N).\nonumber
\end{equation}

Calling now $\mathcal{E}(N):=\{\{i,j\}\;|\;i,j\in\{1,...,N\}\}$ and $f_{i,j}\equiv f(x_i,x_j):=e^{-\beta V(x_i,x_j)}-1$
we have that the factor $e^{-\beta H_{\Lambda}^{\z}(\mathbf{x})}$ can be expressed as
\begin{equation}
	e^{-\beta H_{\Lambda}^{\z}(\mathbf{x})}=\prod_{1\le i<j\le N}(f_{i,j}+1)=\sum_{E\subset\mathcal{E}(N)}\prod_{\{i,j\}\in E}f_{i,j}\nonumber
\end{equation}
where the term +1 is given by  $E=\emptyset\subset\mathcal{E}(N)$. Note that we can associate to any set $E\in\mathcal{E}(N)$ a graph $g\equiv(V(g),E)$ where $V(g):=\{i\in\{1,...,N\}\;|\;\exists\;e\in E\;\mathrm{s.t}\;i\in e\}$ is the set of its vertices and $E$ is the set of its edges. Moreover, a graph created from $E$ does not contain isolated vertices and can be viewed as the pairwise compatible (non-ordered) collection of its connected components, where two graphs $g,\;g'$ are called \textit{compatible} ($g\sim g'$) if and only if $V(g)\cap V(g')=\emptyset$. In other terms given $E$ we can find a graph $g$ such that $g\equiv\{g_1,...,g_k\}_{\sim}$ with $k\ge1$, where, denoting by $\mathcal{C}_m$ the set of connected graphs with $m$ vertices, $g_l\in\mathcal{C}_m$  for all $l=1,...,k$ and $2\le m\le N.$ In this way we have that 
\begin{equation}\label{exp-H}
	e^{-\beta H_{\Lambda}^{\z}(\mathbf{x})}=\sum_{\substack{\{g_1,...g_k\}_{\sim}\\g_l\;\mathrm{connected}\;\forall\;l}}\prod_{l=1}^k\prod_{\{i,j\}\in E(g_l)} f_{i,j},
\end{equation}
where the collection $\{g_1,...,g_k\}_{\sim}=\emptyset$ gives the term +1 in the sum. In  the sum in the right hand side of \eqref{exp-H} we imply the presence of a sum over $k\ge1$, in the sense that it runs over all collections of $k\ge1$ connected compatible graphs with sets of vertices in $\{1,...,N\}$.

Hence, denoting with  $\mathcal{C}_V$ the set of connected graphs with set of vertices $V$ and defining 
\begin{equation}
	\zeta_{\Lambda}(V):=\sum_{g\in\mathcal{C}_V}\frac{1}{|\Lambda|^{|V|}}\sum_{\mathbf{x}\in\Lambda^{|V|}}\prod_{\{i,j\}\in E(g)}f_{i,j},
	\label{zeta1}
\end{equation}
we get
\begin{eqnarray}
	Z^{int}_{\Lambda,\beta,\z}(N)=\sum_{\substack{\{V_1,...,V_k\}_{\sim}\\|V_l|\ge 2\;\forall\;l}}\prod_{l=1}^k\zeta_{\Lambda}(V_l)=\exp\left\{\sum_{I\in\mathcal{I}}c_I\zeta^I_{\Lambda}\right\},
	\label{CpfCluster}
\end{eqnarray}
where $V_l\equiv V(g_l),\;l=1,...,k$ and the second equality of \eqref{CpfCluster} holds under the validity of Lemma \ref{LemmaCluster} and where we used 
\begin{equation}
	c_I := \frac{1}{I!} \sum_{G\in\mathcal{G}_I}(-1)^{|E(G)|}
	\label{cI}
\end{equation}
which comes from the polymer model representation described below (see also Section 3 of \cite{pulvirenti2012cluster}).  

An {\it{abstract polymer model}} $(\Delta,\mathbb{G}_{\Delta} ,\omega)$ consists of a set of polymers $\Delta := \{\delta_1, . . . , \delta_{|\Delta|}\}$, a compatibility graph $\mathbb{G}_{\Delta}$ with set of vertices $\Delta$ and set of edges $E_{\Delta}$ such that $\{i,j\}\in E_{\Delta}$ if and only if $\delta_i\not\sim \delta_j$ (i.e. $\delta_i\cap\delta_j\ne\emptyset$) and a weight function $\omega : \Delta \to \mathbb{R}$. 

In our case the set of polymers is given by $\mathcal{V} := \{V\subset\{1,...,N\},\;|V|\ge2\}$
and the weight function is $\zeta_{\Lambda}(V)$ defined in \eqref{zeta1}. In the second equality in \eqref{CpfCluster} the sum in the exponent  is over the set $\mathcal{I}$ of all multi-indices $I : \mathcal{V} \to \{0,1,...\}$, with $\zeta^I_{\Lambda} =\prod_{V} \zeta(V)^{I(V)}$. Denoting also with $\mathrm{suppI} := \{V \in \mathcal{V}\;|\;I(V) > 0\},\;\mathcal{G}_I$ is the graph with
$\sum_{V\in\mathrm{suppI}}I(V)$ vertices induced from $\mathcal{G}_{\mathrm{suppI}} \subset\mathbb{G}_{\mathcal{V}}$ by replacing each vertex $V$ by the complete graph on $I(V)$ vertices. 

We recall that (as it is observed in \cite{pulvirenti2012cluster}), the sum in \eqref{cI} is over all connected subgraphs $G$ of $\mathcal{G}_I$ spanning the whole set of vertices of $\mathcal{G}_I$ and $I! =\prod_{V\in\mathrm{suppI}} I(V)!$, indeed if $I$ is  not a cluster (i.e. $\mathcal{G}_{\rm{suppI}}$ is not connected), then $c_I = 0$.

Then from Sections 5 and 6 of \cite{pulvirenti2012cluster} and using the representation above we have that $F_{\beta,N,\Lambda}(n)$ is given by \eqref{F_bNL} where now we can define rigorously $B_{\Lambda,\beta}(n)$ as follows:
\begin{equation}
	B_{\Lambda,\beta}(n):=\frac{|\Lambda|^n}{n!}\sum_{A(I)=[n+1]}c_I\zeta_{\Lambda}^I,
	\label{cluster-coeff}
\end{equation}
where $A(I) := \bigcup_{V\in\mathrm{supp}\;I} V \subset\{1,..,N\}$ and $[n+1]:=\{1,...,n+1\}$.

The convergence of the cluster expansion is guaranteed from the following Lemma, in which we follow Theorem 1 - (ii) in \cite{morais2013continuous} and we use the tree graph inequality as it is presented in \cite{procacci2017convergence}. 

\begin{lemma}\label{LemmaCluster}
	There exist  constants $\mathcal{R}^C$ and $a>0$, such that when  $N/|\Lambda|<\mathcal{R}^C$, the following holds:
	\begin{equation}
		\sup_{i\in\{1,..,N\}}\sum_{V\in\mathcal{V}(N)\;:\;i\in V}|\zeta_{\Lambda}(V)|e^{a|V|}\le e^a-1.
		\label{E1L1}	
	\end{equation}
\end{lemma}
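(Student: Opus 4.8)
The plan is to verify \eqref{E1L1} by a direct estimate of the polymer weight $|\zeta_\Lambda(V)|$, following \cite{morais2013continuous}, combined with the standard summation over polymers containing a fixed index. Since the weight \eqref{zeta1} depends on $V$ only through its cardinality $m := |V|$, and since exactly $\binom{N-1}{m-1}$ subsets $V \subset \{1,\ldots,N\}$ of size $m$ contain a fixed index $i$, the left-hand side of \eqref{E1L1} collapses to a single sum over $m \ge 2$:
\[
\sum_{V \ni i} |\zeta_\Lambda(V)|\, e^{a|V|} = \sum_{m \ge 2} \binom{N-1}{m-1}\, |\zeta_\Lambda(m)|\, e^{am},
\]
where $|\zeta_\Lambda(m)|$ is the common value of $|\zeta_\Lambda(V)|$ for $|V|=m$. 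It therefore suffices to bound $|\zeta_\Lambda(m)|$ and to recognise the resulting power series in $\rho := N/|\Lambda|$.

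To estimate $|\zeta_\Lambda(m)|$ I would begin from \eqref{zeta1} and apply the tree-graph inequality in the form of \cite{procacci2017convergence}: for a stable interaction with stability constant $B = 8Jd$ as in \eqref{Stability}, the sum over connected graphs on $m$ vertices is dominated by the sum over spanning trees, with a factor $e^{Bm}$ accounting for stability,
\[
\Bigl| \sum_{g \in \mathcal{C}_V} \prod_{\{i,j\} \in E(g)} f_{i,j} \Bigr| \le e^{Bm} \sum_{T \in \mathcal{T}_V} \prod_{\{i,j\} \in E(T)} |f_{i,j}|,
\]
where $\mathcal{T}_V$ denotes the set of spanning trees on $V$. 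For each spanning tree the position sum factorises along edges: summing over the root position yields a factor $|\Lambda|$, while each of the $m-1$ edges contributes at most $\sum_{x \in \mathbb{Z}^d} |e^{-\beta V(x)} - 1| = C_{J,d}(\beta)$ from \eqref{TildeC} (enlarging the lattice sum from $\Lambda$ to $\mathbb{Z}^d$ only increases it, so the bound is uniform in $\Lambda$). Dividing by $|\Lambda|^m$ and counting spanning trees by Cayley's formula $m^{m-2}$ gives
\[
|\zeta_\Lambda(m)| \le e^{Bm}\, m^{m-2}\, \Bigl( \frac{C_{J,d}(\beta)}{|\Lambda|} \Bigr)^{m-1}.
\]

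Substituting this together with $\binom{N-1}{m-1} \le N^{m-1}/(m-1)!$ into the reduced sum and collecting the powers of $\rho$ yields
\[
\sum_{V \ni i} |\zeta_\Lambda(V)|\, e^{a|V|} \le e^{B+a} \sum_{k \ge 1} \frac{(k+1)^{k-1}}{k!}\, \bigl( \rho\, C_{J,d}(\beta)\, e^{B+a} \bigr)^{k}.
\]
The series $\sum_{k \ge 1} \frac{(k+1)^{k-1}}{k!} z^k$ is the tree function, convergent precisely for $|z| < e^{-1}$ and vanishing as $z \to 0$; hence the right-hand side is an explicit finite function of $z := \rho\, C_{J,d}(\beta)\, e^{B+a}$ that tends to $0$ as $\rho \to 0$. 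To finish I would fix any $a > 0$ and then choose $\mathcal{R}^C \equiv \mathcal{R}^C(d,J,\beta)$ small enough that $\rho < \mathcal{R}^C$ forces $z < e^{-1}$ and makes the whole sum at most $e^a - 1$; this is possible since the bound is continuous and increasing in $\rho$ and vanishes at $\rho = 0$, while $e^a - 1 > 0$. The main point requiring care is the correct insertion of the stability constant into the tree-graph inequality and the uniform-in-$\Lambda$ control of the lattice position sums by $C_{J,d}(\beta)$; once these are secured, the remainder is the classical tree-function bound and a choice of constants.
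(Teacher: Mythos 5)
Your proof is correct and follows essentially the same route as the paper: reduce the sum over polymers containing $i$ to a sum over $m=|V|$ via the factor $\binom{N-1}{m-1}$, bound $|\zeta_\Lambda(V)|$ by the tree-graph inequality of \cite{procacci2017convergence} together with Cayley's formula and the uniform-in-$\Lambda$ edge sums, and recognize a convergent tree-function series in $\rho=N/|\Lambda|$. The only differences are quantitative: you keep the weaker edge weight $|f_{i,j}|$ (hence $C_{J,d}(\beta)$) where the paper uses the sharper Procacci--Yuhjtman weight $1-e^{-\beta|V(x_i-x_j)|}$ (hence $\bar{C}_{J,d}(\beta)$), you drop a factor $\beta$ in the stability exponent (it should read $e^{\beta B m}$), and you settle for the mere existence of $\mathcal{R}^C$ rather than the explicit optimized value \eqref{RadCan} obtained by maximizing over $a$ as in \cite{morais2013continuous}, which is what Theorem \ref{TH1} cites the lemma for.
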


\begin{proof} 
	
	We start, noting that Proposition 1 in \cite{procacci2017convergence} is here valid, i.e.,  we have the validity of the following tree-graph inequality:
	\begin{equation}
		\left|\sum_{g\in\mathcal{C}_n}\prod_{\{i,j\}\in E(g)}f_{i,j}\right|\le e^{\beta Bn}\sum_{T\in\mathcal{T}_n}\prod_{\{i,j\}\in E(T)}(1-e^{-\beta |V(x_i-x_j)|}),
		\label{Prop1-PY}
	\end{equation}
	where $\mathcal{T}_n$ is the  set of  trees with $n=|V|$ vertices.
	Then, given a rooted tree $T\in\mathcal{T}_n$ with set of edges given by  $E(T)=\{(i_1,j_1),...,(i_{n-1},j_{n-1})\}$  and defining $d_1(x,\Lambda^c):=\inf_{x'\in\Lambda^c}\{|x-x'|\}$ with $x\in\mathbb{Z}^d$, we get:
	
	\begin{eqnarray}
		&&\sum_{\mathbf{x}\in\Lambda^n}\frac{1}{|\Lambda|^n}\prod_{\{i,j\}\in E(T)}(1-e^{-\beta|V(x_i-x_j)|}) \le  \frac{1}{|\Lambda|^n}\sum_{x_{i_1}\in\Lambda} \sum_{\mathbf{y}\in\Lambda^{n-1}}\prod_{k=2}^{n}(1-e^{-\beta |V(y_k)|})\label{E3}\;\;\;\;\;\;\;\;\\
		&&\;\;\;\;\;\;\;\;=\frac{1}{|\Lambda|^n}\sum_{x_{i_1}\in\Lambda} \sum_{\mathbf{y}\in\Lambda^{n-1}}\prod_{k=2}^{n}(1-e^{-\beta |V(y_k)|})(\mathbf{1}_{\{d_1(x_{i_1},\Lambda^c)\le1\}}+\mathbf{1}_{\{d_1(x_{i_1},\Lambda^c)>1\}})\nonumber\\
		&&\;\;\;\;\;\;\;\;\;\;\;\;\;\;\;\;\;\;\;\;\;\;\;\;\;\;\;\;\;\;\;\;\;\;\;\; \le\frac{[\bar{C}_{J,d}(\beta)]^{n-1}}{|\Lambda|^n}\sum_{x_{i_1}\in\Lambda}(\mathbf{1}_{\{d_1(x_{i_1},\Lambda^c)\le1\}}+\mathbf{1}_{\{d_1(x_{i_1},\Lambda^c)>1\}})\nonumber\\
		&&\;\;\;\;\;\;\;\;\;\;\;\;\;\;\;\;\;\;\;\;\;\;\;\;\;\;\;\;\;\;\;\;\;\;\;\;\;\;\;\;\;\;\;\;\;\;\;\;\;\;\;\;\le\begin{cases}
			\frac{1}{|\Lambda|^{n-1}}[\bar{C}_{J,d}(\beta)]^{n-1},\;{\rm{if}}\;d_1(x_{i_1},\Lambda^c)>1\\
			\frac{|\partial\Lambda|}{|\Lambda|^{n}}[\bar{C}_{J,d}(\beta)]^{n-1},\;\;\;\;\;{\rm{if}}\;d_1(x_{i_1},\Lambda^c)\le1.\\
		\end{cases}\nonumber
	\end{eqnarray}
	In \eqref{E3} we considered $i_1$ as a root, $\mathbf{y}$ is a vector in $\Lambda^{n-1}$ with components $y_k=x_{i_k}-x_{j_k},\;\forall\,k=2,...,n$, and fixing $x^*\in\mathbb{Z}^d$,  we defined:
	\begin{equation}
		\bar{C}_{J,d}(\beta):=\sum_{x\in\mathbb{Z}^d}(1-e^{-\beta\left|V(x-x^*)\right|})=1+2d(1-e^{-4\beta J}).
		\label{C-Bar}
	\end{equation} 
	Let us note that $|\partial\Lambda|/|\Lambda|$  vanishes as $\Lambda \to \mathbb{Z}^d$ for a suitable sequence $(\Lambda_n)_{n\ge 1}$. For the analogous calculation in the continuous case we also refer to  \cite{pulvirenti2015finite} formulas (4.18)-(4.21).
	
	%
	%
	Hence, from \eqref{Prop1-PY} and \eqref{E3} we can write:
	\begin{equation}
		|\zeta_{\Lambda}(V)|\le\frac{n^{n-2}}{|\Lambda|^{n-1}}e^{\beta Bn}\,[\bar{C}_{J,d}(\beta)]^{n-1},
		\label{E4L1}
	\end{equation}
	where $B$ is the stability constant defined in \eqref{Stability}.
	
	Fixing now $i\in \{1,...,N\},$ and using the fact that $\zeta_{\Lambda}(V)$ depends only on $|V|$, from \eqref{E4L1}, for the left hand side of \eqref{E1L1} we get
	\begin{eqnarray}\label{DaRimettere}
		\sup_{i\in\{1,..,N\}}&&\sum_{V\in\mathcal{V}(N)\;:\;i\in V}|\zeta_{\Lambda}(V)|e^{a|V|}\le e^{a+\beta B}\times\\
		&&\times\sum_{n=2}^N {{N-1}\choose{n-1}} \frac{n^{n-2}}{|\Lambda|^{n-1}}\left[e^{(\beta B+a)}\bar{C}_{J,d}(\beta)\right]^{n-1}.\nonumber
	\end{eqnarray}
	The latter implies that \eqref{E1L1} is verified when the following is true
	\[\sum_{n\ge1}\left[\frac{N}{|\Lambda|}e^{(\beta B+a)}\bar{C}_{J,d}(\beta)\right]^{n-1}\frac{n^{n-1}}{n!}\le 1+e^{-\beta B}(1-e^{-a}).\]
	
	Then using the result from \cite{morais2013continuous} (formulas (3.12)-(3.15)) we have that the cluster expansion is absolutely convergent (uniformly in $N$ and $\Lambda$) when
	\begin{equation}
		\frac{N}{|\Lambda|}\le\mathcal{R}_C,\nonumber
		\label{ConvCan}
	\end{equation}
	where	
	\begin{equation}
		\mathcal{R}_C:=[e^{\beta B}\bar{C}_{J,d}(\beta)]^{-1}\left\{\max_{a>0}\frac{\ln[1+e^{-\beta B}(1-e^{-a})]}{e^a[1+e^{-\beta B}(1-e^{-a})]}\right\}.
		\label{RadCan}
	\end{equation}
\end{proof}
For the conclusion of the proof of Theorem~\ref{TH1} we refer the reader to \cite{pulvirenti2012cluster}, Sections 5 and 6. 

\subsection{Some remarks}
\label{RemarKacBC}

Let us given some more precise examples how we can generalize our approach.

{\textbf{Kac potential.}} We consider first the Ising model with a Kac potential as it formalized in Section 4.2.1 and  Section 9 of  \cite{presutti2008scaling}. Moreover, we recall  that this kind of potential is the one considered also in \cite{farrell1966cluster}.

Hence, the Hamiltonian \eqref{HamIs} is here given by:

\begin{equation}
	\mathcal{H}^{\boldsymbol{\sigma}}_{\Lambda,R}(\boldsymbol{\sigma})	:=-\sum_{\{x,x'\}\in\mathcal{E}_{\Lambda,R}} J_R(|x-x'|)\sigma(x)\sigma(x'),
\end{equation}	
where, given $0<R<<L$ - with $\Lambda=(-L/2,L/2]^d\cap\mathbb{Z}^d,\;L\in\mathbb{Z}$ - we defined 
\[\mathcal{E}_{\Lambda,R}:=\{\{x,x'\}\in\mathbb{Z}^d\;|\;\{x,x'\}\cap\Lambda\ne \emptyset,\;0<R|x-x'|\le 1\}\]
and
\begin{equation}
	J_R(|x-x'|):=R^d J(|Rx-Rx'|)>0.
\end{equation}	 
The function $J(\cdot)$ satisfies the following assumptions: (i)  $J(|x'-y'|)=J(|x-y|)$ where for all $a\in\mathbb{R}^d,\;x'=x+a,\;y'=y+a$; (ii) $J(r)$ is a non-negative, $C^2$ function, supported by the unit ball and such that $\int_{\mathbb{R}^d}J(r)dr=1$.  See for example Figure 4.4 in \cite{presutti2008scaling}.
Then, passing to the lattice gas system, i.e., applying \eqref{transf}, we have
\begin{eqnarray}\label{Kac1}
	\mathcal{H}^{\boldsymbol{\eta}}_{\Lambda,R}(\boldsymbol{\eta})\equiv	\mathcal{H}^{\boldsymbol{\sigma}}_{\Lambda,R}(\boldsymbol{\sigma})	&=&4m'|J_R(\mathcal{E}_{\Lambda,R})|-|J_R(\mathcal{E}_{\Lambda,R})|\nonumber\\
	&-&4\sum_{\{x,x'\}\in\mathcal{E}_{\Lambda,R}} J_R(|x-x'|)\eta(x)\eta(x'),
\end{eqnarray}	
where $|J_R(\mathcal{E}_{\Lambda,R})|:=\sum_{\{x,x'\}\in\mathcal{E}_{\Lambda,R}}J_R(|x-x'|)$.

Equation \eqref{Kac1} implies that our potential will be given here by the following:
\begin{equation}
	V_R(x-x'):=\begin{cases}
		\infty,\;\;\;\;\;\;\;\;\; \;\;\;\;\;\; \;\;\;\;\;\; \;x=x',\\
		-4 J_R(|x-x'|),\;\;0<R|x-x'|\le 1,\\
		0,\;\;\;\;\;\;\;\;\;\;\;\;\;\; \;\;\;\;\;\; \;\;\; {\rm{otherwise}}.	
	\end{cases}		
\end{equation}	 
Thus, if we consider for instance 
\[J(|Rx-Rx'|):=\frac{\mathbf{1}_{\{|x-x'|\le R\}}}{R^d}\]we will find that our stability constant as well as the regularity are not more given by \eqref{Stability} and \eqref{TildeC}, but by 
\[B_R:=-8Rd\;\;\;{\rm{and}}\;\;\; C_{d,R}(\beta):=2dR(e^{4\beta}-1)+1.\]
Let also note that, instead of \eqref{C-Bar} we will find \[\bar{C}_{d,R}(\beta):=1+2dR(1-e^{-4\beta}).\]
Hence, having these quantities, rewriting properly \eqref{Prop1-PY}, \eqref{E3} and \eqref{RadCan}, the validity of Lemma \ref{LemmaCluster} and consequently Theorem \ref{TH1} is still true also in this case.

{\textbf{Non-zero boundary conditions.}} Lemma \ref{LemmaCluster} and then Theorem \ref{TH1} also hold true if we consider $\g\ne\z$ fixed boundary conditions.  Indeed, defining  $\nu_{\Lambda}(x_i|\g):=e^{-\beta \sum_{j\ge1}V(x_i-\gamma_j)}>0$, 
which is 1 if $d_1(x,\Lambda^c)>1$, we can write \eqref{CanPF} as
\begin{equation}
	Z_{\Lambda,\beta}^{\g}(N)=\frac{1}{N!}\sum_{\mathbf{x}\in\Lambda^N}e^{-\beta H^{\mathbf{0}}_{\Lambda}(\mathbf{x})}\prod_{i=1}^{N}\nu_{\Lambda}(x_i|\g),\nonumber
\end{equation}
where we used 
\begin{equation}
	H^{\g}_{\Lambda}(\mathbf{x})=H^{\z}_{\Lambda}(\mathbf{x})+\sum_{\substack{1\le i \le N, x_i\in\Lambda\\ j\ge1,\; \gamma_j\in\Lambda^c}} V(x_i-\gamma_j). \nonumber
\end{equation}
Then noting that 
\begin{equation}
	e^{\beta B}\le \nu_{\Lambda}(x_i|\g)\le e^{\beta d B}\nonumber
\end{equation}	
estimate \eqref{E3} is here given by 
\begin{eqnarray}
	&&\sum_{\mathbf{x}\in\Lambda^n}\prod_{i=1}^n\frac{\nu_{\Lambda}(x_i|\g)}{|\Lambda|}\prod_{(i,j)\in E(T)}|f_{i,j}|
	\le \begin{cases}
		\frac{C_{J,d}(\beta)^{n-1}}{|\Lambda|^{n-1}},\;\;\;\mathrm{if}\;d_1(x_i,\Lambda^c )>1\;\forall\;i=1,...,n,
		\\
		\\
		\frac{e^{\beta d B}}{|\Lambda|^{n-1}}\left\{\frac{|\partial\Lambda|}{|\Lambda|}[e^{\beta d B}C_{J,d}(\beta)]^{n-1}\right\},\;\;\mathrm{otherwise}.\nonumber
	\end{cases}
\end{eqnarray}

To obtain the contribution of order $|\partial\Lambda|/|\Lambda|$, we proceeded - and we can also  conclude - as for the case of zero boundary condition - see \eqref{E3}.

{\textbf{Penrose tree-graph inequality.}} The usual estimate done using the \textquotedblleft classical\textquotedblright\space tree graph inequality due to Penrose - see Theorem 0 in \cite{procacci2017convergence} - gives us the following (see also formula (3.10)-(3.11) in \cite{morais2013continuous}):
\begin{equation}
	|\zeta_{\Lambda}(V)|\le \frac{n^{n-2}}{|\Lambda|^{n-1}}e^{2\beta B(n-2)}[C_{J,d}(\beta)]^{n-1},
\end{equation}
instead of  \eqref{E3}. We used $|V|=n$ and $C_{J,d}(\beta)$ given in \eqref{TildeC}, which is such that $\bar{C}_{J,d}(\beta)\le C_{J,d}(\beta)$.

Hence, instead of \eqref{DaRimettere} we find 
\begin{eqnarray}\label{DaRimettere1}
	\sup_{i\in\{1,..,N\}}&&\sum_{V\in\mathcal{V}(N)\;:\;i\in V}|\zeta_{\Lambda}(V)|e^{a|V|}\le e^{a-2\beta B}\times\nonumber\\
	&&\times\sum_{n=2}^N {{N-1}\choose{n-1}} \frac{n^{n-2}}{|\Lambda|^{n-1}}\left[e^{(2\beta B+a)}C_{J,d}(\beta)\right]^{n-1}.
\end{eqnarray}

In this way, applying directly  the (ii) of Theorem 1 in \cite{morais2013continuous}, instead of \eqref{RadCan} we find the following:
\begin{equation}\label{R-MP}
	\bar{\mathcal{R}}_C:=[e^{2\beta B}C_{J,d}(\beta)]^{-1}	\left\{\max_{a>0}\frac{\ln[1+e^{2\beta B}(1-e^{-a})]}{e^a[1+e^{2\beta B}(1-e^{-a})]}\right\}.
\end{equation}	
Let us also note that the new estimate given in Lemma \ref{LemmaCluster}, does not depend on the fact that we are on $\mathbb{Z}^d$. Hence, under a proper reformulation and assumptions, this could be applied also in the continuous case ($\Lambda\subset\mathbb{R}^d$).  

We want to briefly underline the following fact. Let us define 
\[\mathfrak{F}(u):=\max_{a>0}\frac{\ln[1+u(1-e^{-a})]}{e^{a}[1+u(1-e^{-a})]}.\]
In \cite{morais2013continuous},  the authors apply a more refined analysis for the estimate \eqref{E1L1}, in order to obtain a better convergence condition than the usual one (compare, for example, \eqref{R-MP} with the result in \cite{pulvirenti2012cluster}). Considering now $\mathcal{R}_C$ - \eqref{RadCan} - and $\bar{\mathcal{R}}_C$ define above, we can say what follows. On one hand, we have that for all $\beta,B,J>0,\;[e^{\beta B}\bar{C}_{J,d}(\beta)]^{-1}$ is bigger than $[e^{2\beta B}C_{J,d}(\beta)]^{-1}$. On the other, $\mathfrak{F}(e^{2\beta B})$ is bigger than $\mathfrak{F}(e^{-\beta B})$,when $\beta$ is \textquotedblleft small enough\textquotedblright - dependently on $d$ and $J$ -  in such a way that $\bar{\mathcal{R}}_C>\mathcal{R}_C$.
A comparison between $\mathcal{R}_C$ and $\bar{\mathcal{R}}_C$ is  given in Figure \ref{CanCan1} below, for $J=1,\;d=1,2,3$ and $\beta\in[0,1]$. As we can see, there exists inverse temperature $\beta^*\equiv\beta^*(J,d)$ such that  $\mathcal{R}_C\le \bar{\mathcal{R}}_C$ for $\beta\le\beta^*$, as well as $\bar{\mathcal{R}}_C< \mathcal{R}_C$ for $\beta> \beta^*$. It is also possible to recover the same behavior if we fix $d$, and we vary $J$. 

\begin{figure}[H]
	\centering
	\includegraphics[width=1
	\textwidth]{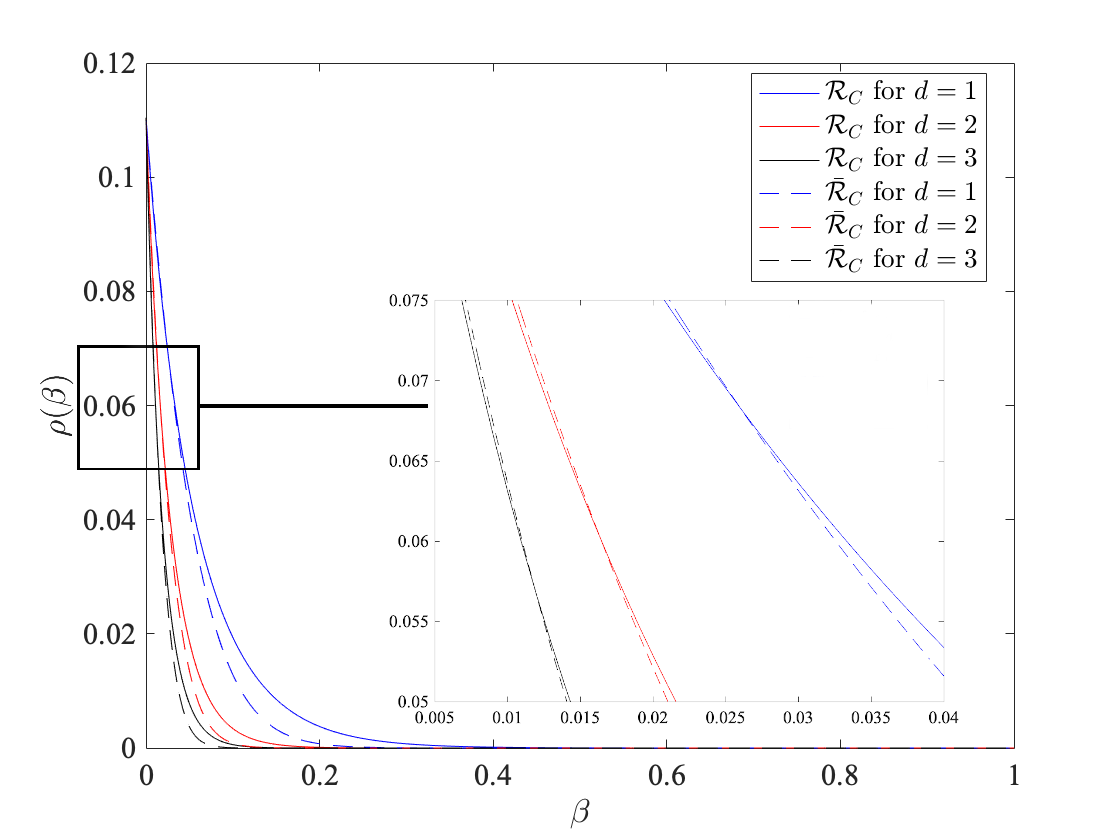}
	\caption{$\mathcal{R}_{C}$ (continuous line) and $\bar{\mathcal{R}}_{C}$ (dashed line) with $J=1$ and $\beta\in[0,1]$, in dimension 1 (blue lines), 2 (red lines) and 3 (black lines).}
	\label{CanCan1}
\end{figure}

\section{Grand canonical ensemble}\label{SectionRD}

In this section we will consider various representation of the grand-canonical descriptions and related results. In particular:  

\begin{enumerate}
	
	\item first, in Subsection \ref{SubSecGC0} , we relate the grand-canonical partition function for the Ising model with {\it{external magnetic field}} $h$, with the one for a lattice gas system with chemical potential $\mu$, as we already done in the canonical ensemble;
	
	\item second, in Subsection \ref{SubSecGC1}, we will establish the  condition of convergence for the cluster expansion for the Ising model with the contour representation as it is presented in Chapter 5 of \cite{friedli2017statistical}, and the one for the lattice gas system using the results presented in \cite{procacci2017convergence} and we compare them graphically;
	
	\item third, in Subsection \ref{SubSecGC2}, we find the lower bound of the density radius of convergence for the virial inversion of the lattice gas model and we compare it graphically with the value obtained in the canonical ensemble, given by \eqref{RadCan}.
\end{enumerate}

\subsection{Ising model and related lattice gas system in the grand-canonical ensemble.}
\label{SubSecGC0}

Using the Hamiltonian defined in \eqref{HamIs}, the {\it{grand-canonical partition function for the ferromagnetic  Ising model}} in a box $\Lambda\subset\mathbb{Z}^d$  with $-1$ boundary conditions is given by:
\begin{equation}
	\tilde{\Xi}^{-}_{\Lambda,\beta}(h):=\sum_{\sig\in\{-1,1\}^{\Lambda}}e^{\beta h\sum_{x\in\Lambda}\sigma(x)-\beta\mathcal{H}_{\Lambda}^{-}(\sig)}=\sum_{\substack{m\;:\\m|\Lambda|=\sum_{x\in\Lambda}\sigma(x)}}e^{\beta h m|\Lambda|}\tilde{Z}_{\Lambda,\beta}^{-}(m),
	\label{GC-Is}
\end{equation} 
where with the apex \textquotedblleft -\textquotedblright\space we mean  $\sigma^c=-1$, $\tilde{Z}_{\Lambda,\beta}^{-}(m)$ is given by \eqref{CpfIs}, $h$ is an external magnetic field and since we will work close to the -1 phase we will consider $h\le 0$. Using \eqref{GC-Is} we define the {\it{finite volume}} and {\it{thermodynamic pressure}} for the Ising model respectively as given by:
\begin{equation}\label{FVPressIs}
	\beta \psi_{\Lambda,\beta,-1}(h):=\frac{1}{|\Lambda|}\log \tilde{\Xi}^{\mathbf{-}}_{\Lambda,\beta}(h)
\end{equation}	
and
\begin{equation}\label{ThermodynPressIs}
	\psi_{\beta}(h):=\lim_{\Lambda\to\mathbb{Z}^d}\psi_{\Lambda,\beta,-1}(h).
\end{equation}


Moreover, we recall that using the partition function  $\Xi^{\z}_{\Lambda,\beta}(\mu)$ given by \eqref{GcPF1}, the finite volume pressure $p_{\Lambda,\beta,\z}(\mu)$ is given by \eqref{FVLattice} and hence, the {\it{thermodynamic pressure}} is defined as follows:
\begin{equation}
	p_{\beta}(\mu):=\lim_{\Lambda\to\mathbb{Z}^d}p_{\Lambda,\beta,\z}(\mu).
	\label{TdP}
\end{equation}

From \eqref{GcPF1}, using \eqref{transf} - \eqref{Cpf}, we have: 
\begin{equation}
	\Xi^{\z}_{\Lambda,\beta}(\mu)=\exp\left\{\beta|\Lambda|\left[\frac{\mu}{2}+J\frac{|\mathcal{E}_{\Lambda}|}{|\Lambda|}\right]\right\}\tilde{\Xi}^{-}_{\Lambda,\beta}(h_{\Lambda}),
	\label{GcPF2}
\end{equation}
where
\begin{equation}
	h_{\Lambda}\equiv h_{\Lambda}(\mu):=\frac{\mu}{2}+\frac{|\mathcal{E}_{\Lambda}|}{|\Lambda|}.
	\label{Mu-H}
\end{equation}

On the other hand, in a similar way, if we start form the Ising model, i.e., given $h\in\mathbb{R}$ we get 
\begin{equation}
	\tilde{\Xi}^{-}_{\Lambda,\beta}(h)=\exp\left\{-\beta|\Lambda|\left[h-J\frac{|\mathcal{E}_{\Lambda}|}{|\Lambda|}\right]\right\}\Xi^{\z}_{\Lambda,\beta}(\mu_{\Lambda}),
	\label{GcPF}
\end{equation}
where
\begin{equation}\label{MU-H}
	\mu_{\Lambda}\equiv\mu_{\Lambda}(h):=2h-4J\frac{|\mathcal{E}_{\Lambda}|}{|\Lambda|}.
\end{equation}

Furthermore, we recall that - in the framework considered here, i.e., far away from the phase transitions - between then thermodynamic free energy \eqref{FreeE} and the thermodynamic pressure \eqref{TdP}, the following Legendre transform relations occurs:  
\begin{equation}
	\beta f_{\beta}(\rho)=\sup_{\mu}\{\rho\mu-\beta p_{\beta}(\mu)\}
	\label{LeTr}
\end{equation}
and
\begin{equation}
	\beta p_{\beta}(\mu)=\sup_{\rho}\{\rho\mu-\beta f_{\beta}(\rho)\}.
	\label{LeTr1}
\end{equation}

\subsection{Cluster expansion of \eqref{GC-Is} and \eqref{GcPF1}.}\label{SubSecGC1}

For the cluster expansion of \eqref{GC-Is} we use the contour ensemble, i.e., the polymer model represention of $\tilde{\Xi}^-_{\Lambda,\beta}(h)$ as it is presented in Chapter 5 of \cite{friedli2017statistical}. Hence, we rewrite  \eqref{GC-Is} as  
\begin{equation}
	\tilde{\Xi}^{\mathbf{-}}_{\Lambda,\beta}(h)=\exp\left\{\beta|\Lambda|\left[J\frac{|\mathcal{E}_{\Lambda}|}{|\Lambda|}-h\right]\right\}\Xi^{Int}_{\Lambda,\beta}(z_h),\nonumber
	\label{Xi}
\end{equation}
where
\begin{equation}
	\Xi^{Int}_{\Lambda,\beta}(z_h):=1+\sum_{n\ge1}\frac{1}{n!}\sum_{S_1}\cdot\cdot\cdot\sum_{S_n}\prod_{1\le i<j\le n}(\hat{f}_{i,j}+1)\prod_{i=1}^nw(S_i)z_h^{|S_i|}.\nonumber
	\label{Xi-Int}	
\end{equation}
In the last definition we used the following objects: $S$ is a maximal connected subset of $\Lambda$ satisfying (i) $\sigma(x)=+1$ for all $x\in S$ and (ii) $|x-x'|=1$ for all $x,x'\in S$,

\begin{equation}
	\hat{f}_{i,j}\equiv \hat{f}(S_i,S_j):=\begin{cases}
		-1\;\;{\rm{if}}\;\inf\{|x-x'|\;x\in S_i,\;x'\in S_j\}\le1,\\
		0\;\;\;\;\,{\rm{otherwise}},\nonumber
	\end{cases}
\end{equation}
$z_h=\exp\{2\beta h\}$ and  $w(S):=\exp\left\{-2\beta J|\partial_eS|\right\},$ where $\partial_eS:=\{\{x,x'\}\;|\;|x-x'|=1,\;x\in S,\;x'\notin S\}.$

In this case, denoting with $[S]_1:=\{x\in\mathbb{Z}^d\;|\;d_1(x,S)\le1\}$, we have 
\begin{equation}
	\log\Xi^{Int}_{\Lambda,\beta}(h)=\sum_{n\ge 1}\sum_{S_1\subset\Lambda}\cdot\cdot\cdot\sum_{S_n\subset\Lambda}\frac{1}{n!}\sum_{g\in\mathcal{C}_n}\prod_{\{i,j\}\in E(g)}\hat{f}_{i,j}\prod_{i=1}^n\nonumber w(S_i)z_h^{|S_i|}	
	\label{ClusterContours}
\end{equation}
and
\begin{equation}
	1+\sum_{n\ge2}\frac{1}{(n-1)!}\sum_{S_2\subset\Lambda}\cdot\cdot\cdot\sum_{S_n\subset\Lambda}\left|\sum_{g\in\mathcal{C}_n}\prod_{\{i,j\}\in E(g)}\hat{f}_{i,j}\right|\prod_{i=2}^n z_{h}(S_i)\le e^{|[S_1]_1|},\nonumber
	\label{Corollary}
\end{equation}  
under the condition (Section 5.7.1 in \cite{friedli2017statistical})
\begin{equation}
	\sum_{S\subset\Lambda}|w_h(S)z_h^{|S|}||\hat{f}(S,S^*)|e^{|[S]_1|}\le|[S^*]_1|,\;\;\forall\;S^*\subset\Lambda.
	\label{Conv-cond}
\end{equation}
Hence, having that  $|[S]_1|\le (2d+1)|S|$ and
\begin{eqnarray}
	\sum_{S\subset\Lambda}|w_h(S)z_h^{|S|}||\hat{f}(S,S^*)|e^{|[S]_1|}\le |[S^*]_1|\sum_{S\ni0}|w_h(S)z_h^{|S|}|e^{|[S]_1|}\nonumber
	\\
	\le|[S^*]_1|\sum_{n\ge1}e^{n[2\beta h+2d+1+2\log(2d)]},\nonumber
\end{eqnarray}
\eqref{Conv-cond} is valid when
\begin{equation}
	K(h,d):=[e^{-(2\beta h +2d+1+2\log(2d))}-1]^{-1}=\sum_{n\ge1}e^{n[2\beta h+2d+1+2\log(2d)]}\le1,\nonumber
	\label{Conv_IS1}
\end{equation} 
i.e., for all $h$ such that $h\le h_{IS}:=-\frac{1}{2\beta}\left(2d+1+2\log(2d)+\log2\right).$
From \eqref{MU-H}, the corresponding chemical potential is given by the following: 
\begin{equation}
	\mathcal{M}_{IS}:=2h_{IS}-4dJ.
	\label{FVBoundCluster}
\end{equation}

For the cluster expansion of $\Xi^{\z}_{\Lambda,\beta}(\mu_{\Lambda})$ we will use Theorem 1 in \cite{procacci2017convergence}, recalled below and adapted to our context ($\Lambda\subset\mathbb{Z}^d$ instead of $\Lambda\in\mathbb{R}^d$).
\begin{theorem}\label{ThP}[Theorem 1 in \cite{procacci2017convergence}] Let $V$ be a stable and tempered pair potential with stability constant $B$. Then 
	\begin{equation*}
		\left|\frac{1}{|\Lambda|}\frac{1}{n!}\sum_{\mathbf{x}\in\Lambda^n}\sum_{g\in\mathcal{C}_n}\prod_{\{i,j\}\in E(g)}f_{i,j}\right|\le e^{\beta Bn}n^{n-2}\frac{[\hat{C}(\beta)]^{n-1}}{n!},
	\end{equation*}		
	where 
	\begin{equation}\label{HatCbeta}
		\hat{C}(\beta):=\sum_{x\in\mathbb{Z}^d}\left[1-e^{-\beta|V(x)|}	\right].
	\end{equation}	
	Therefore, the Mayer series \[z+\sum_{n\ge2}\left[\frac{1}{|\Lambda|}\frac{1}{n!}\sum_{\mathbf{x}\in\Lambda^n}\sum_{g\in\mathcal{C}_n}\prod_{\{i,j\}\in E(g)}f_{i,j}\right]z^n,\]
	converges absolutely, uniformly in  $\Lambda$, for any complex $z$ inside the disk
	\[|z|< [e^{\beta B+1}\hat{C}(\beta)]^{-1},\]
	i.e., the convergence radius $R$ of the Mayer series admits the following lower bound 
	\[R\ge R^*:= [e^{\beta B+1}\hat{C}(\beta)]^{-1}.\]
\end{theorem}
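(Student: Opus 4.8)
The plan is to combine a tree-graph inequality with Cayley's formula, exactly as in the proof of Lemma~\ref{LemmaCluster}, only now for a general stable and tempered potential rather than the explicit hard-core one. First I would invoke the Penrose (equivalently Procacci--Yau) tree-graph inequality, i.e. the analogue of \eqref{Prop1-PY}, which bounds the signed sum over connected graphs on $n$ vertices by a manifestly positive sum over spanning trees at the cost of the stability factor:
\begin{equation*}
	\left|\sum_{g\in\mathcal{C}_n}\prod_{\{i,j\}\in E(g)}f_{i,j}\right|\le e^{\beta Bn}\sum_{T\in\mathcal{T}_n}\prod_{\{i,j\}\in E(T)}\left(1-e^{-\beta|V(x_i-x_j)|}\right).
\end{equation*}
This is the crucial step: it replaces the intractable alternating sum over connected graphs by a sum whose summands factorize along an acyclic structure.

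Second, I would carry out the spatial sum $\frac{1}{|\Lambda|}\sum_{\mathbf{x}\in\Lambda^n}$ tree by tree. Fixing $T\in\mathcal{T}_n$ and choosing a root, the absence of cycles lets me sum over the vertices from the leaves inward: each of the $n-1$ edges is summed independently and contributes at most $\hat{C}(\beta)=\sum_{x\in\mathbb{Z}^d}[1-e^{-\beta|V(x)|}]$ by temperedness (the discrete analogue of the translation-invariant bound used in \eqref{C-Bar}), while the one remaining free sum over the root position exactly cancels the prefactor $\frac{1}{|\Lambda|}$. Hence every tree contributes at most $[\hat{C}(\beta)]^{n-1}$, uniformly in $\Lambda$.

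Third, I would count the trees by Cayley's formula $|\mathcal{T}_n|=n^{n-2}$, which together with the factor $\frac{1}{n!}$ gives
\begin{equation*}
	\left|\frac{1}{|\Lambda|}\frac{1}{n!}\sum_{\mathbf{x}\in\Lambda^n}\sum_{g\in\mathcal{C}_n}\prod_{\{i,j\}\in E(g)}f_{i,j}\right|\le e^{\beta Bn}\,n^{n-2}\,\frac{[\hat{C}(\beta)]^{n-1}}{n!},
\end{equation*}
which is the first assertion. For the radius of convergence I would then apply the Cauchy--Hadamard root test to the Mayer series: writing $a_n$ for its $n$-th coefficient and using Stirling's approximation $n!\sim\sqrt{2\pi n}\,(n/e)^n$, one computes $(n^{n-2}/n!)^{1/n}\to e$, whence $\limsup_n|a_n|^{1/n}\le e^{\beta B+1}\hat{C}(\beta)$; since the estimate above is an upper bound uniform in $\Lambda$, this yields the stated lower bound $R\ge R^*=[e^{\beta B+1}\hat{C}(\beta)]^{-1}$.

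I expect the main obstacle to be the tree-graph inequality itself, whose derivation (via the Penrose partition scheme that assigns to each connected graph a distinguished spanning tree and absorbs the stability constant into the factor $e^{\beta Bn}$) is the genuinely nontrivial input. Once it is granted, the remaining steps are the same bookkeeping as in Lemma~\ref{LemmaCluster}, the only substantive change being the replacement of the explicit hard-core constant $\bar{C}_{J,d}(\beta)$ of \eqref{C-Bar} by the general tempered constant $\hat{C}(\beta)$ of \eqref{HatCbeta}.
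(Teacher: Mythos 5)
Your proposal is correct and follows essentially the same route as the source the paper cites for this statement (and as the paper's own proof of Lemma~\ref{LemmaCluster}): the tree-graph inequality \eqref{Prop1-PY}, edge-by-edge summation along a rooted spanning tree giving $[\hat{C}(\beta)]^{n-1}$ with the root sum cancelling $1/|\Lambda|$, Cayley's count $n^{n-2}$, and the root test via $n!\ge (n/e)^n$. One small caveat: the inequality you display is the Procacci--Yau tree-graph inequality (Proposition~1 of \cite{procacci2017convergence}), not the classical Penrose one, and the two are \emph{not} equivalent --- Penrose's version yields $e^{2\beta B(n-2)}$ together with the constant $C_{J,d}(\beta)$ of \eqref{TildeC} and hence the different radius $\bar{\mathcal{R}}_C$ of \eqref{R-MP} rather than $[e^{\beta B+1}\hat{C}(\beta)]^{-1}$ --- but since the form you actually use is the correct one, the argument goes through as written.
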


Hence, having $z=e^{\beta \mu_{\Lambda}}$ and $\hat{C}(\beta)$ given by $\bar{C}_{J,d}(\beta)$  defined in \eqref{C-Bar}, when 
\begin{equation}
	e^{\beta\mu_\Lambda+\beta B}\bar{C}_{J,d}(\beta)<e^{-1}\;\Leftrightarrow\; \mu_{\Lambda}\le-\frac{1}{\beta}\log\left(e^{\beta B+1}\bar{C}_{J,d}(\beta)\right)=:\mathcal{M}_{LG},
	\label{Bound-Mu}
\end{equation}
where $B$ given by \eqref{Stability}, \eqref{GcPF1} can be written as
\begin{equation}
	\Xi^{\z}_{\Lambda,\beta}(\mu_{\Lambda})=\exp\left\{\sum_{N\ge1}\frac{e^{\beta\mu_{\Lambda}N}}{N!}\sum_{g\in\mathcal{C}_N}\sum_{\mathbf{x}\in\Lambda^N}\prod_{\{i,j\}\in E(g)}f_{i,j}\right\},
	\label{GC-ClusterE}
\end{equation}	 
where the series in the exponent is absolutely convergent. 

Below we compare $\mathcal{M}_{LG}$	and $\mathcal{M}_{IS}$ for fixed different values of $J$ and $d$ and with $\beta\in[0,1]$. In Figure \ref{CE2}, we  compare the two lower bounds of the radius of convergence for $J=1,2$, $d=1$ and $\beta\in[0,1]$. We observe that there exists $\bar{\beta}\equiv\bar{\beta}(d,J)$ such that $\mathcal{M}_{IS}\le\mathcal{M}_{LG}$ for all $\beta\le \bar{\beta}$ and  $\mathcal{M}_{LG}<\mathcal{M}_{IS}$ when $\beta>\bar{\beta}$. A similar behavior can also be observed if we fix $J$ and we consider different values for the dimension, as it is shown in Figure \ref{CE1}, where we considered $J=1$, $d=1,2$ and $\beta\in[0,1]$.

\begin{figure}[H]
	\centering
	\includegraphics[width=1
	\textwidth]{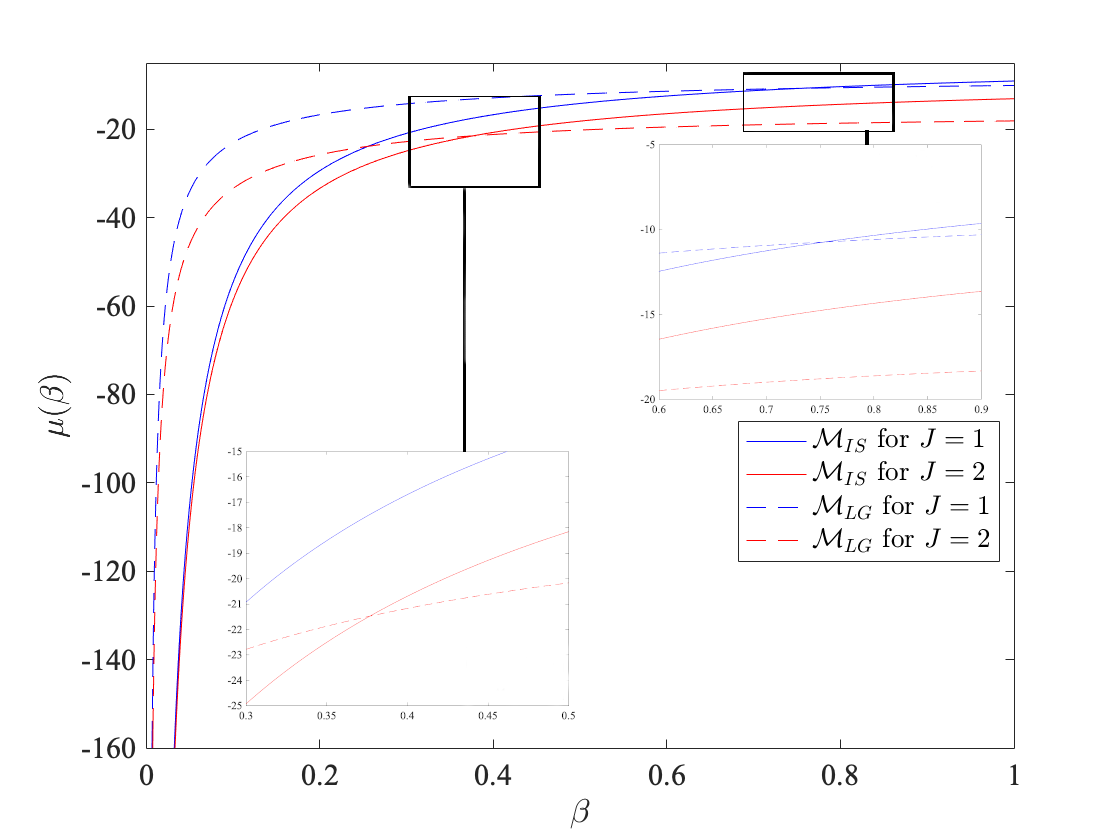}
	\caption{$\mathcal{M}_{IS}$ (continuous line) and $\mathcal{M}_{LG}$ (dashed line) in dimension 1, with $\beta\in[0,1]$ and $J=1$ (blue lines) and 2  (red lines).}
	\label{CE2}
\end{figure}	
\begin{figure}[H]
	\centering
	\includegraphics[width=1
	\textwidth]{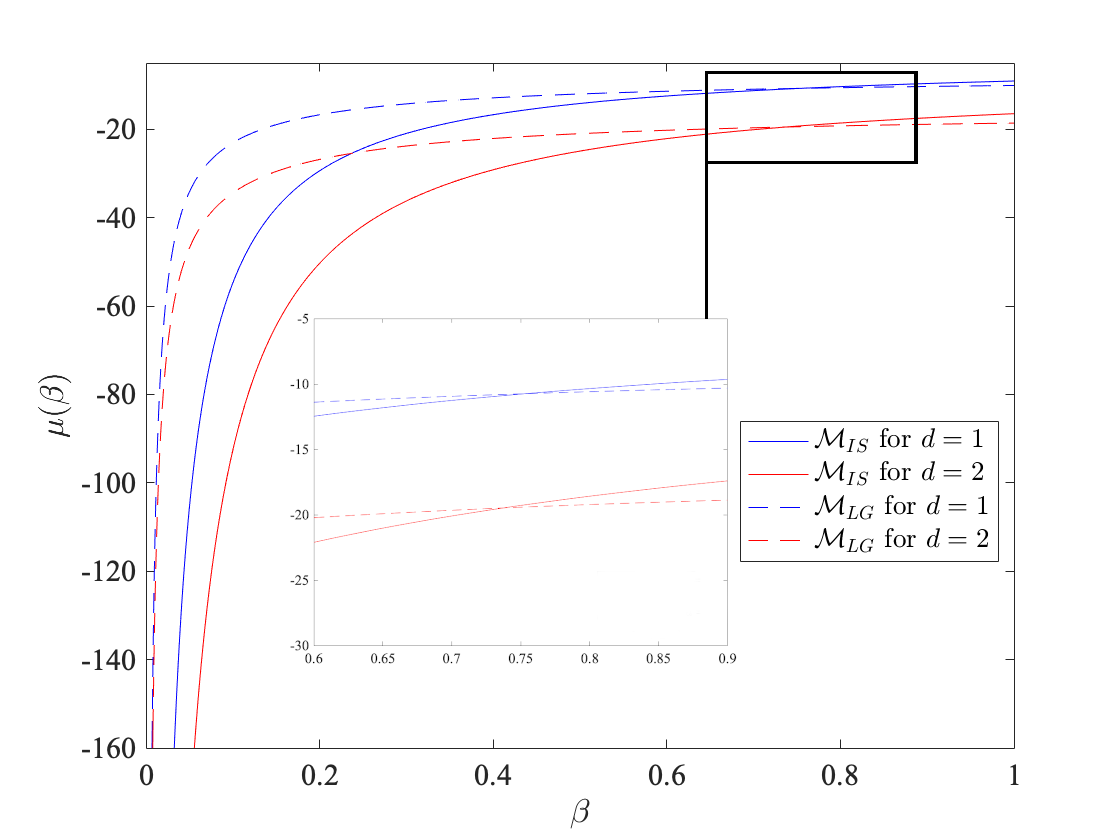}
	\caption{$\mathcal{M}_{IS}$ (continuous line) and $\mathcal{M}_{LG}$ (dashed line) with $J=1$, $\beta\in[0,1]$ and $d=1$ (blue lines) and 2  (red lines).}
	\label{CE1}
\end{figure}

\subsection{Virial inversion.}\label{SubSecGC2}

From \eqref{TdP} and \eqref{GC-ClusterE} we find
\begin{equation}
	\beta p_{\beta}(\mu)=\sum_{n\ge1}e^{\beta\mu n}b_n,
	\label{P-INF}
\end{equation}
with $\mu:=\lim_{\Lambda\to\infty}\mu_{\Lambda}=2h-4Jd$. The $b_n$'s are the \textquotedblleft discrete version\textquotedblright\space - in the same sense of \eqref{Mayer} - of the {\it{connected Mayer's coefficient}} (formula (13.5) in \cite{mayerstatistical}). More precisely they are defined as: \[b_n:=\frac{1}{n!}\sum_{g\in\mathcal{C}_n}\sum_{\mathbf{x}\in(\mathbb{Z}^d)^n}\prod_{\{i,j\}\in E(g)}f_{i,j}.\]

Hence, we derive now the density expansion for the pressure defined in \eqref{TdP} which can be written also as in  \eqref{P-INF}. We recall that this representation is equivalent with the one of the Ising model (see formulas \eqref{GcPF2}-\eqref{MU-H}). Moreover, thanks to this equivalence we have that between the thermodynamic pressure of the Ising model - \eqref{ThermodynPressIs} -  and the one of the lattice gas system - \eqref{TdP} - the relation below occurs: 
\begin{equation}
	\beta p_{\beta}(\mu)=\beta\psi_{\beta}(h)-\beta J d+\beta h.
	\label{Pmu-Ph}
\end{equation}

Let us define now the density as follows:
\begin{equation}
	\rho\equiv\rho(\mu):=\beta\frac{\partial p_{\beta}(\mu)}{\partial\log(e^{\beta\mu})}=\frac{\partial p_{\beta}(\mu)}{\partial\mu}.
	\label{GC-Density}
\end{equation}
Using the results presented in \cite{jansen2019virial}, we get: 
\begin{equation}
	\beta\mu\equiv\beta\mu(\rho)=\log\rho-\sum_{n\ge1}\beta_n\rho^n\;\;\mathrm{and}\;\;\beta p_{\beta}(\rho)=\rho+\sum_{n\ge1}\frac{n\beta_n}{n+1}\rho^{n+1},
	\label{InvMu}
\end{equation}
when 
\begin{equation}
	\rho\le \mathcal{R}_V:=\left(2e^{1+\beta[4J(2d+1)]}\bar{C}_{J,d}(\beta)\right)^{-1},
	\label{RadGCanL}
\end{equation}
where $\beta_n$'s are given by \eqref{Mayer} and $\bar{C}_{J,d}(\beta)$ is defined in \eqref{C-Bar}. 

Wanting to be more precise, the validity of \eqref{InvMu} under the condition \eqref{RadGCanL} follows from the application of Theorem 4.1 in \cite{jansen2019virial} recalled below, and as for Theorem \ref{ThP}, adapted to our context ($\Lambda\subset\mathbb{Z}^d$).We will call $B^*$ the positive constant such that $\inf V\ge -B^*$ - which is given in our case by  $B^*:=4J$ - and we will use the quantity $\hat{C}(\beta)$  defined in \eqref{HatCbeta} and $\beta_n$ given by \eqref{Mayer}.

\begin{theorem}[Theorem 4.1 in \cite{jansen2019virial}]
	
	(a) If $\rho\in \mathbb{C}$ satisfies $\hat{C}(\beta)e^{\beta[B+B^*]}|\rho|\le (2e)^{-1}$, then $\sum_{n\ge1}|\beta_n\rho^n|\le \frac{1}{2}$. In particular the radius of convergence of the previous sum is bounded by below by
	\[R^*_{V}:=\left[2e^{1+\beta[B+B^*]}\hat{C}(\beta)\right]^{-1}.\]
	
	(b) There exists a neighborhood $\mathcal{O}$ of the origin with 
	\[\left\{z\in\mathbb{C}\;|\; |z|e^{\beta[B+B^*]}\hat{C}(\beta)< \frac{1}{e e^{2/e}}\right\}\subset\mathcal{O}\subset\left\{z\in\mathbb{C}\;|\;|z|e^{\beta[B+B^*]}\hat{C}(\beta)< \frac{1}{2\sqrt{e}}\right\}\]
	such that $\rho\equiv\rho(z)$ is a bijection from $\mathcal{O}$ onto the open ball $B(0,R^*_V)$, with inverse 
	\[z(\rho)=\rho\exp\left\{-\sum_{n\ge1}\rho^n\beta_n\right\}.\]
	
	(c) For all $z\in\mathcal{O}$, we have
	\[\beta p_{\beta}(z)=\rho(z)+\sum_{n\ge1}\frac{n\beta_n}{n+1}[\rho(z)]^{n+1}.\]
	
	(d) For all $\rho\in B(0,R^*_V)$, the Helmhotz free energy $f_{\beta}(\rho):=\sup_{z}\{\beta^{-1}\rho\log z-p_{\beta}(z)\}$, is given by
	\[\beta f_{\beta}(\rho)=\rho(\log\rho-1)-\sum_{n\ge1}\frac{\rho^{n+1}}{n+1}\beta_n.\]
\end{theorem}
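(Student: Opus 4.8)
The plan is to prove the four parts in the order stated, with essentially all of the work concentrated in the coefficient bound of part (a); parts (b)--(d) then follow by analytic inversion and Legendre duality applied to absolutely convergent series.

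For (a) the target is a bound of the form
\[
|\beta_n|\le \frac{(n+1)^{n-1}}{n!}\big(e^{\beta[B+B^*]}\hat{C}(\beta)\big)^n .
\]
Granting this, (a) is immediate: writing $w:=e^{\beta[B+B^*]}\hat{C}(\beta)|\rho|$ we get $\sum_{n\ge1}|\beta_n\rho^n|\le\sum_{n\ge1}\frac{(n+1)^{n-1}}{n!}w^n=T(w)/w-1$, where $T(w)=we^{T(w)}$ is the tree function (radius of convergence $e^{-1}$); since the hypothesis $w\le(2e)^{-1}$ keeps $w$ strictly inside the disk of convergence, a direct estimate of this series gives a value $\le\tfrac12$, and the boundary case $|\rho|=R^*_V$ is exactly $w=(2e)^{-1}$. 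To prove the displayed bound I would start from the definition \eqref{Mayer} with $x_1\equiv0$ and apply a tree-graph inequality to the sum over the $2$-connected graphs $g\in\mathcal{B}_{n+1}$: the Penrose/Procacci inequality recalled in Theorem~\ref{ThP} and \cite{procacci2017convergence} bounds the sum over \emph{connected} graphs by $e^{\beta Bn}$ times $|\mathcal{T}_n|\le n^{n-2}$ spanning-tree contributions; the refinement needed here is the analogous statement for $2$-connected graphs, which introduces the extra stability factor $e^{\beta B^*}$ (with $B^*=4J$) and replaces Cayley's count $n^{n-2}$ by the number $(n+1)^{n-1}$ of spanning trees on the $n+1$ labelled points. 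Summing the free coordinates $x_2,\dots,x_{n+1}$ along a spanning tree, translation invariance eliminating one sum, then produces the factor $\hat{C}(\beta)^n$ from \eqref{HatCbeta}.

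For (b), once (a) holds the series $\Phi(\rho):=\sum_{n\ge1}\beta_n\rho^n$ is analytic on $B(0,R^*_V)$ with $|\Phi|\le\tfrac12$ there, so $z(\rho):=\rho\,e^{-\Phi(\rho)}$ is analytic on the same ball with $z(0)=0$ and $z'(0)=1\neq0$; hence $\rho\mapsto z(\rho)$ is locally biholomorphic at the origin. To pin down the image and the inverse domain quantitatively I would invoke the quantitative holomorphic inverse-function estimates (Koebe-type and Rouch\'e bounds on where $z'$ stays nonzero) to locate the neighbourhood $\mathcal{O}$ between the two explicit activity disks in the statement, and then verify that $\rho(z)=\sum_{n\ge1}n b_n z^n$ — the density \eqref{GC-Density}, with $b_n$ as in \eqref{P-INF} — is precisely the analytic inverse, by checking $z(\rho(z))=z$ as formal power series via Lagrange inversion and passing to genuine identity by analyticity. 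Theorem~\ref{ThP} is what guarantees $\rho(z)$ is analytic with nonvanishing derivative on the relevant activity disk.

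For (c) I would substitute the inversion into the pressure. From $\beta p_\beta(z)=\sum_{n\ge1}b_n z^n$ (the thermodynamic limit of \eqref{GC-ClusterE}, cf.\ \eqref{P-INF}) and $\rho=z\frac{d}{dz}(\beta p_\beta)$ one gets $d(\beta p_\beta)=\rho\,d\log z=\rho\,d(\log\rho-\Phi(\rho))=(1-\rho\,\Phi'(\rho))\,d\rho$; term-by-term integration from $0$, justified by the absolute convergence from (a), yields the density expansion of the pressure recorded in \eqref{InvMu} and part (c). Finally (d) follows from (c) by the Legendre duality of \eqref{LeTr}--\eqref{LeTr1}: the supremum defining $f_\beta$ is attained at the stationary activity, where $\partial_{\log z}(\beta p_\beta)=\rho$, so $\beta f_\beta(\rho)=\rho\log z(\rho)-\beta p_\beta(z(\rho))$; inserting $\log z(\rho)=\log\rho-\Phi(\rho)$ together with the expression from (c) collapses the transform to $\beta f_\beta(\rho)=\rho(\log\rho-1)-\sum_{n\ge1}\frac{\rho^{n+1}}{n+1}\beta_n$, as claimed (and consistent with \eqref{TFE_CE}). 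The \emph{main obstacle} is the coefficient bound in (a): controlling the sum over $2$-connected graphs is genuinely harder than the connected case, because one cannot reduce all the way to a single spanning tree without destroying the irreducibility that defines $\beta_n$, and it is exactly here that the sharp combinatorial count $(n+1)^{n-1}$ and the additional stability factor $e^{\beta B^*}$ must be tracked; by comparison, (b)--(d) are soft manipulations of absolutely convergent series.
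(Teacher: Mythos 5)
First, a caveat on the comparison itself: the paper does not prove this statement. It is quoted verbatim (adapted from $\mathbb{R}^d$ to $\mathbb{Z}^d$) from Theorem 4.1 of \cite{jansen2019virial} and used as a black box to justify \eqref{InvMu} under the condition \eqref{RadGCanL}; there is no in-paper argument to measure your proposal against, so the only question is whether your sketch would actually deliver the statement as written.

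Your architecture (direct bound on the irreducible coefficients $\beta_n$, then analytic inversion, then Legendre duality) is the classical Lebowitz--Penrose route; parts (c)--(d) are indeed soft once (a) and (b) hold, and your numerical check at $w=(2e)^{-1}$, namely $\sum_{n\ge1}\frac{(n+1)^{n-1}}{n!}w^n=T(w)/w-1\approx 0.26\le\tfrac12$, is correct. But the gap is exactly where you locate it and is larger than you suggest. (i) The bound $|\beta_n|\le\frac{(n+1)^{n-1}}{n!}\bigl(e^{\beta[B+B^*]}\hat{C}(\beta)\bigr)^n$ is asserted, and the proposed derivation --- ``the analogous statement for $2$-connected graphs'' with an extra factor $e^{\beta B^*}$ and the Cayley count $(n+1)^{n-1}$ --- is not a routine refinement of the Penrose/Procacci inequality: a sum over $2$-connected graphs cannot be reduced to a single spanning tree without losing the irreducibility, and the known direct estimates of this kind produce stability factors of the form $e^{c\beta Bn}$, not the mixed constant $e^{\beta[B+B^*]}$ in which $B^*$ is a \emph{pointwise} lower bound on $V$ rather than a stability constant. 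In \cite{jansen2019virial} this specific constant arises from a different mechanism altogether: an abstract Lagrange-type inversion of the activity--density relation that uses only the bounds on the \emph{connected} coefficients $b_n$ of \eqref{P-INF} (as in Theorem \ref{ThP}) together with $\inf V\ge -B^*$ in the fixed-point/derivative estimates; $\beta_n$ is never bounded by a graph-by-graph argument. You would need to either prove your claimed $2$-connected tree-graph inequality with exactly these constants (nontrivial, and not obviously true in this form) or switch to the inversion route. (ii) Even granting (a), part (b) of your argument silently uses that the compositional inverse of $\rho(z)=\sum_{n}nb_nz^n$ is $\rho\,e^{-\sum_n\beta_n\rho^n}$ with $\beta_n$ given by the $2$-connected sums \eqref{Mayer}. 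That identity is Mayer's second theorem (the dissymmetry between connected and $2$-connected graphs); it is itself a substantive combinatorial statement and cannot be dismissed as ``checking $z(\rho(z))=z$ via Lagrange inversion''. So the proposal has the right skeleton but leaves unproved precisely the two facts that constitute the theorem.
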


Following (d) of the previous theorem, i.e., from \eqref{TFE_CE} and \eqref{InvMu}, we can recover (explicitly) the Legendre transform relations between the thermodynamic free energy and the thermodynamic pressure given by \eqref{LeTr}, i.e. 
\[\rho(\log\rho-1)-\sum_{n\ge 1}\frac{\rho^{n+1}}{n+1}\beta_n=\rho\left[\log\rho-\sum_{n\ge1}\rho^n\beta_n\right]-\rho+\sum_{n\ge1}\frac{n\rho^{n+1}}{n+1}\beta_n.\]

In the next figures we compare $\mathcal{R}_C$ and $\mathcal{R}_V$ in dimension 1,2,3, with $J=1$ and $\beta\in[0,1]$ (Figure \ref{Virial2}). 
We have that the lower bound obtained in the grand-canonical ensemble is bigger than the one obtained in the canonical ensemble.  Moreover, the same behavior can be observed if we fix $d$ and vary $J$. 
As it is deductible from Figures \ref{CanCan1}, 
we will find the same behavior if we consider $\bar{\mathcal{R}}_C$ instead of $\mathcal{R}_C$.


\begin{figure}[H]
	\centering
	\includegraphics[width=1
	\textwidth]{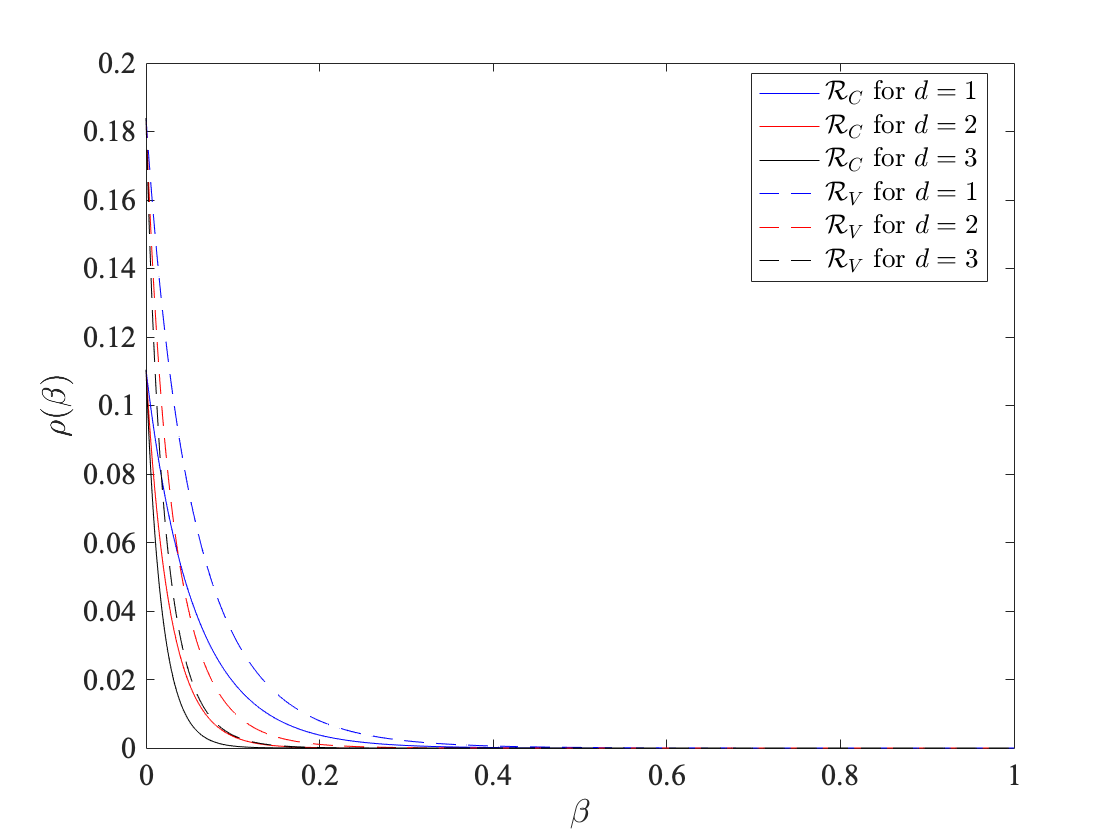}
	\caption{$\mathcal{R}_C$ (continuous line) and $\mathcal{R}_{V}$ (dashed line) with $J=1$ and $\beta\in[0,1]$ in dimension 1 (blue lines), 2 (red lines) and 3 (black lines).}
	\label{Virial2}
\end{figure}



We want to conclude this section with the following observation.  
Let us now define the {\it{thermodynamic free energy for the Ising model}} as
\begin{equation}\label{freeEnIs}
	\phi_{\beta}(m):=\lim_{\Lambda\to\mathbb{Z}^d}-\frac{1}{\beta|\Lambda|}\log\tilde{Z}^{-}_{\Lambda,\beta}(m).
\end{equation}  
From \eqref{CpfLg} and \eqref{Cpf}  we obtain that:
\begin{equation}
	\beta f_{\beta}(\rho)=\beta\phi_{\beta}(m)-4d\beta J\left(\frac{m+1}{2}\right)+d\beta J,   
	\label{FreeE-Lattice-Ising}
\end{equation}
since $|\mathcal{E}_{\Lambda}|/|\Lambda|\to d$ as $|\Lambda|\to\infty$
and where $\rho=m'=(m+1)/2$. 

Furthermore, \eqref{LeTr}, \eqref{Pmu-Ph} and \eqref{FreeE-Lattice-Ising} give us the following relation between the thermodynamic free energy and the thermodynamic pressure for the Ising model
\begin{equation*}
	\beta\phi_{\beta}(m)=\sup_{h}\left\{2h\left(\frac{m+1}{2}\right)-\beta h-\beta\psi_{\beta}(h)\right\}.
\end{equation*}

\section{Decay of correlations in the canonical ensemble, proof of Theorem \ref{TH2}}
\label{SectionD}

For the proof of Theorem \ref{TH2} we follow the strategy of \cite{kuna2018convergence}.   Let $n\in\mathbb{N}_0$ and $k\in\mathbb{N}$. We denote with $\mathcal{C}_{n,n+k}$ the set of connected graphs with $n+k$ vertices, where  we singled out $n$  vertices which will be called \textquotedblleft white\textquotedblright\space and the remaining $k$ vertices will be called \textquotedblleft black\textquotedblright. 
Moreover, we call \textit{articulation\;vertex}, a vertex such that removing it the graph is decomposed in two or more separate part, where at least one of them does not contain white vertices. Hence, we denote with  $\mathcal{B}^{AF}_{n,n+k}$  the set of graphs with $n$ white and $k$ black vertices and without articulation vertices.   

We define the \textit{n-point\;correlation\;function} with $n\le N$  as:
\begin{equation*}
	\rho^{(n)}_{\Lambda,N}(q_1,...,q_n):=\frac{1}{(N-n)!}\sum_{\mathbf{x}\in\Lambda^{N-n}}\frac{1}{Z^{per}_{\Lambda,\beta}(N)}e^{-\beta H^{per}_{\Lambda}(q_1,...,q_n,\mathbf{x})},
\end{equation*}
where with $\{q_i\}_{i=1}^n\subset\Lambda$ we denote the fixed particles and $Z^{per}_{\Lambda,\beta}(N)$ is given by \eqref{CanPF} with periodic boundary conditions. When we will do the cluster expansion, the fixed particles  $\{q_i\}_{i=1}^n$ will correspond to the white vertices in the connected graphs (clusters).  

Denoting with $\mu_{\Lambda,\beta ,N}(\cdot)$ the {\it{canonical Gibbs measure}} in the volume $\Lambda$, i.e.,
\begin{equation}
	\mu_{\Lambda,\beta,N}(C):=\frac{1}{Z^{per}_{\beta,\Lambda}(N)}\frac{1}{N!}\sum_{\mathbf{x}\in\Lambda^N\cap\; C}e^{-\beta H^{per}_{\Lambda}(\mathbf{x})},\nonumber
\end{equation}
where $C\subset (\mathbb{Z}^d)^{N}$, we define for a test function $\varphi$, the  {\it{Bogoliubov functional}} $L_B(\varphi)$ as
\begin{equation}
	L_B(\varphi):=\sum_{\mathbf{x}\in\Lambda^N}\prod_{k=1}^N(1+\varphi(x_k))\mu_{\Lambda,\beta, N}(\{\mathbf{x}\}).\nonumber
	\label{BOG}
\end{equation}
We can define implicitly the \textit{truncated\;n-point\;correlation\;function} $u^{(n)}_{\Lambda,N}(\cdot)$ by its generating function which is the logarithm of the Bogoliubov functional, i.e.,
\begin{equation}
	\log L_{B}(\varphi)=:\sum_{n\ge1}\frac{1}{n!}\sum_{\mathbf{x}\in\Lambda^n}\varphi(x_1)\cdot\cdot\cdot\varphi(x_n)u^{(n)}_{\Lambda,N}(x_1,...,x_n),
	\label{trunc}
\end{equation} 
where, for example, when $n=2$ and fixing $q_1,q_2\in\Lambda,\;u^{(2)}_{\Lambda,N}(q_1,q_2)$ is given by \eqref{U2}.

The {\it{extended (canonical) partition function}} is defined as
\begin{equation}\begin{split}
		Z^{per}_{\Lambda,\beta,N}(\alpha\varphi):=\frac{1}{N!}\sum_{\mathbf{x}\in\Lambda^N}\prod_{i=1}^{N}(1+\alpha\varphi(x_i))e^{-\beta H^{per}_{\Lambda}(\mathbf{x})}\nonumber
	\end{split}
	\label{ModCan}
\end{equation}
with $\alpha\in\mathbb{R}$, such that 
\begin{equation}
	L_B(\alpha\varphi)=\frac{Z^{per}_{\Lambda,\beta,N}(\alpha\varphi)}{Z^{per}_{\Lambda,\beta,N}(0)},\;\;\mathrm{where}\;\;Z^{per}_{\Lambda,\beta,N}(0)\equiv Z^{per}_{\Lambda,\beta}(N)\nonumber
\end{equation}
and then, thanks to \eqref{trunc} for all $n\ge1$, we have
\begin{equation}
	\sum_{\mathbf{x}\in\Lambda^n}\varphi(x_1)\cdot\cdot\cdot\varphi(x_n)u^{(n)}_{\Lambda,N}(x_1,...x_n)=\frac{\partial^n}{\partial \alpha^n}\log Z^{per}_{\Lambda,\beta,N}(\alpha\varphi)\bigg|_{\alpha=0}.
	\label{Key}
\end{equation}

Using the polymer model representation recalled in Section \ref{SectionCE}, with set of polymers 
$\mathcal{V}^*_{N}:=\{\{(V_1,A_1),...,(V_k,A_k)\}\;|\;V_i\in\{1,...,N\},\;|V_i|\ge2,\;{\rm{and}}\;A_i\subset V_i\;\forall\;i=1,...,k\}$ where the compatibility relation is here given by $(V_i,A_i)\sim(V_j,A_j)\Leftrightarrow V_i\cap V_j=\emptyset$ and with weights
\begin{equation}
	\bar{\zeta}_{\Lambda}((V,A)):=\alpha^{|A|}\sum_{g\in\mathcal{C}_{V}}\frac{1}{|\Lambda|^{|V(g)|}}\sum_{\mathbf{x}\in\Lambda^{|V(g)|}}\prod_{\{i,j\}\in E(g)}f_{i,j}\prod_{i\in A}\varphi(x_i),\nonumber
\end{equation}	
for $N/|\Lambda|$ small enough (see Theorem 2.1 in \cite{kuna2018convergence}), we have
\begin{eqnarray}
	\log Z^{per}_{\Lambda,\beta,N}(\alpha\varphi)&=&\log Z^{per}_{\Lambda,\beta}(N)\nonumber\\
	&+&\sum_{n=1}^N\sum_{m=1}^n\sum_{k=0}^{N-m}{N\choose{m+k}}{{m+k}\choose m}\alpha^n\sum_{\substack{I\;:\;\bigcup_{(V,A)\in\mathrm{supp}I}A=[m]\\\bigcup_{(V,A)\in\mathrm{supp}I}V=[m+k]\\\sum_{(V,A)\in\mathrm{supp}I} |A| I((V,A))=n}} c_I\bar{\zeta}_{\Lambda}^I \nonumber
	\\
	&=& \log Z^{per}_{\Lambda,\beta}(N)+\sum_{n=1}^N\sum_{m=1}^n\sum_{k=0}^{N-m}\alpha^n \tilde{P}_{N,|\Lambda|}(m+k)\tilde{B}_{\Lambda,\beta}(n,m,k),
	\label{ClusterModCanInt}
\end{eqnarray}  
where 
\begin{equation}\label{P}
	\tilde{P}_{N,\Lambda}(n):=\begin{cases}
		\frac{N(N-1)\cdot\cdot\cdot(N-n+1)}{|\Lambda|^n},\;\;\;\;\;\mathrm{for}\;n\le N,\\
		0,\;\;\;\;\;\;\;\;\;\;\;\;\;\;\;\;\;\;\;\;\;\;\;\;\;\;\;\;\;\;\mathrm{otherwise},\nonumber
	\end{cases}
\end{equation}
and
\begin{equation}
	\tilde{B}_{\Lambda,\beta}(n,m,k):=\frac{|\Lambda|^{(m+k)}}{m!k!}\sum_{\substack{I\;:\;\bigcup_{(V,A)\in\mathrm{supp}I}A=[m]\\\bigcup_{(V,A)\in\mathrm{supp}I}V=[m+k]\\\sum_{(V,A)\in\mathrm{supp}I} |A| I((V,A))=n}} c_I\bar{\zeta}_{\Lambda}^I.\nonumber
	\label{ModMayer}
\end{equation}

The term $\tilde{B}_{\Lambda,\beta}(n,m,k)$ can be written as
\begin{equation}
	\tilde{B}_{\Lambda,\beta}(n,m,k)=\bar{B}_{\Lambda,\beta}(n,k)\delta_{n,m}+R_{\Lambda,\beta}(n,m,k)
	\label{Correlation4}
\end{equation}
with
\begin{equation}
	\bar{B}_{\Lambda,\beta}(n,k):=\frac{|\Lambda|^{(n+k)}}{n!k!}\sum_{I\;:\:A(I)=[n+k]}^*c_I\bar{\zeta}_{\Lambda}^I=\frac{1}{n!k!}\sum_{g\in\mathcal{B}^{AF}_{n,n+k}}\sum_{\mathbf{x}\in\Lambda^{n+k}}\prod_{\{i,j\}\in E(g)}f_{i,j}\prod_{i=1}^n\varphi(x_i).
	\label{Correlation5}
\end{equation}
In the previous definition with $*$ we mean that the sum runs over all multi-indices which satisfy  $n+k=|V_0|+\sum_{(V,A)\in\mathrm{supp}I,\;V\ne V_0}(|V|-1)$ and $I((V,A))=1$ for all $(V,A)\in\mathrm{supp}I$ and where $V_0$ contains the indices $1,2,...,n$. The second form of $\bar{B}_{\Lambda,\beta}(n,k)$ expressed in \eqref{Correlation5} is due to the fact that we consider here periodic boundary conditions (Lemma 4.1 in \cite{kuna2018convergence}).

Hence, from \eqref{Key}, \eqref{ClusterModCanInt} and \eqref{Correlation4} we get:
\begin{eqnarray}
	\frac{1}{2}\sum_{(x_1,x_2)\in\Lambda^2}\varphi(x_1)\varphi(x_2)u^{(2)}_{\Lambda,N}(x_1,x_2)&=&\sum_{k=0}^{N-1}\tilde{P}_{N,|\Lambda|}(1+k)R_{\Lambda,\beta}(2,1,k)\nonumber
	\\
	&+&\sum_{k=0}^{N-2}\tilde{P}_{N,|\Lambda|}(2+k)\bar{B}_{\Lambda,\beta}(2,k).\;\;\;
	\label{Correlation2}
\end{eqnarray}

The first sum gives a contribution of order $|\Lambda|^{-1}$. This estimate comes from the fact that the term $R_{\Lambda,\beta}(n,m,k)$ consists of lower order terms and in particular from \cite{pulvirenti2015finite} (as it is also recalled \cite{kuna2018convergence}), we have
\begin{equation}
	|R_{\Lambda,\beta}(n,m,k)|\le C\frac{1}{|\Lambda|},   
	\label{BoundErr}
\end{equation}
for all $n,\;k$ and uniformly on $\varphi$. 

For the first term ($k=0$) in the second sum ($n=m=2$) we have
\begin{eqnarray}
	&&\frac{1}{2}\left|\sum_{(x_1,x_2)\in\Lambda^2}\frac{N(N-1)}{|\Lambda|^2}f_{1,2}\;\varphi(x_1)\varphi(x_2)\right|\nonumber
	\\
	&&\le\frac{1}{2}\left[\left(\frac{N}{|\Lambda|}\right)^2+\frac{N}{|\Lambda|^2}\right]\sum_{(x_1,x_2)\in\Lambda^2}|\varphi(x_1)\varphi(x_2)|\left[(e^{4\beta J}-1)\mathbf{1}_{\{|x_1-x_2|=1\}}+\mathbf{1}_{\{x_1=x_2\}}\right]\;\;\;\;\;\;\;\;	
	\label{Correlation8}
\end{eqnarray}

For $k\ge1$ we will use the analogous of Lemma 4.2 (which is recalled below) in \cite{kuna2018convergence} in order to exchange the sum over $k$ and the one over $\mathbf{x}$.
\begin{lemma}\label{Lemma4.1}
	For any $n\ge2$ and  k$\ge1$ we have that 
	\begin{equation}
		\tilde{P}_{N,|\Lambda|}(n+k)\hat{B}_{\Lambda,\beta}(n;k)\le C \left(\frac{N}{|\Lambda|}\right)^2 e^{-ck},\nonumber
	\end{equation}
	where
	\begin{equation}
		\hat{B}_{\Lambda,\beta}(n;k):=\frac{1}{n!k!}\sum_{\mathbf{x}\in\Lambda^k}\left|\sum_{g\in\mathcal{B}^{AF}_{n,n+k}}\prod_{\{i,j\}\in E(g)}f_{i,j}\right|,\nonumber
	\end{equation}
	for some $c>1$ and $C>0$ independent on $k,\;N$ and $\Lambda$.
\end{lemma}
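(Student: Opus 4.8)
The plan is to transcribe the proof of Lemma 4.2 in \cite{kuna2018convergence} from the continuum to the lattice: integrals over $\mathbb{R}^d$ are replaced by sums over $\Lambda\subset\mathbb{Z}^d$, the stability constant is $B=8Jd$ as in \eqref{Stability}, and the role of the one–particle integral is played by $\bar{C}_{J,d}(\beta)$ of \eqref{C-Bar}. The decay in $k$ will come from the interplay of the density prefactor $\tilde{P}_{N,|\Lambda|}(n+k)$ with the volume sum over the $k$ black vertices, the stability/tree–graph bound keeping the graph sum under control.

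First I would dispose of the prefactor. Since $n\ge2$ and, under the hypothesis of Lemma \ref{LemmaCluster}, $N/|\Lambda|<\mathcal{R}_C\le1$, the falling factorial satisfies
\begin{equation*}
	\tilde{P}_{N,|\Lambda|}(n+k)\le\Big(\tfrac{N}{|\Lambda|}\Big)^{n+k}\le\Big(\tfrac{N}{|\Lambda|}\Big)^{2}\Big(\tfrac{N}{|\Lambda|}\Big)^{k},
\end{equation*}
so the announced factor $(N/|\Lambda|)^2$ is extracted and it remains to prove $(N/|\Lambda|)^{k}\,\hat{B}_{\Lambda,\beta}(n;k)\le C\,e^{-ck}$.

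For the graph sum I would use that every $g\in\mathcal{B}^{AF}_{n,n+k}$ is connected on its $n+k$ vertices, so that the tree–graph inequality, in the form adapted to the articulation–free class in \cite{kuna2018convergence} (the lattice analogue of \eqref{Prop1-PY} from \cite{procacci2017convergence}), gives
\begin{equation*}
	\Big|\sum_{g\in\mathcal{B}^{AF}_{n,n+k}}\prod_{\{i,j\}\in E(g)}f_{i,j}\Big|\le e^{\beta B(n+k)}\sum_{T\in\mathcal{T}_{n+k}}\prod_{\{i,j\}\in E(T)}\big(1-e^{-\beta|V(x_i-x_j)|}\big).
\end{equation*}
Summing the $k$ black positions over $\Lambda$ exactly as in \eqref{E3} — rooting each spanning tree at a white vertex and peeling off black leaves — each summed vertex contributes at most $\bar{C}_{J,d}(\beta)\ge1$, so the position sum is bounded by $[\bar{C}_{J,d}(\beta)]^{n+k-1}$, while Cayley's formula bounds the number of trees by $(n+k)^{n+k-2}$. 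Hence
\begin{equation*}
	\hat{B}_{\Lambda,\beta}(n;k)\le\frac{(n+k)^{n+k-2}}{n!\,k!}\,e^{\beta B(n+k)}\,[\bar{C}_{J,d}(\beta)]^{n+k-1}.
\end{equation*}

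Finally I would combine these bounds with the surviving factor $(N/|\Lambda|)^k$. Writing $a:=e^{\beta B}\bar{C}_{J,d}(\beta)$ and using Stirling's approximation $(n+k)^{n+k-2}/k!\s_n k^{\,n}\,e^{k}$ for fixed $n$, one obtains
\begin{equation*}
	\Big(\tfrac{N}{|\Lambda|}\Big)^{k}\hat{B}_{\Lambda,\beta}(n;k)\s_n k^{\,n}\Big(e\,a\,\tfrac{N}{|\Lambda|}\Big)^{k},
\end{equation*}
which is $\le C\,e^{-ck}$ as soon as $e\,a\,N/|\Lambda|<1$, the polynomial $k^{\,n}$ being absorbed into $C$ at the cost of a slightly smaller rate; choosing $N/|\Lambda|<[e^{1+c}\,e^{\beta B}\bar{C}_{J,d}(\beta)]^{-1}$ then yields any prescribed $c>1$, which is the meaning of ``$N/|\Lambda|$ small enough''. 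The main obstacle is the middle step: the tree–graph bound must be applied to the \emph{restricted} class $\mathcal{B}^{AF}_{n,n+k}$ rather than to all connected graphs, so that the cancellations inside the signed sum are genuinely controlled. This is precisely the point where the block/articulation–vertex analysis of \cite{kuna2018convergence} does the work, and where one must verify that passing from $\mathbb{R}^d$ to $\mathbb{Z}^d$ changes nothing essential beyond the replacement of the constants.
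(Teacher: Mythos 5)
Your proposal is correct in outline and follows essentially the same route as the paper, whose proof of this lemma consists entirely of deferring to Lemma 4.2 and Theorem 3.1 of \cite{kuna2018convergence}: extract $(N/|\Lambda|)^2$ from the falling factorial using $n\ge2$, control the graph sum by a tree--graph estimate, sum the $k$ black vertices over $\Lambda$ to produce powers of $\bar{C}_{J,d}(\beta)$, and let the leftover $(N/|\Lambda|)^{k}$ beat the combinatorial growth for small density. The one step you rightly flag as the real obstacle --- that the tree--graph inequality \eqref{Prop1-PY} bounds the signed sum over \emph{all} connected graphs and does not automatically transfer to the restricted articulation-free class $\mathcal{B}^{AF}_{n,n+k}$ --- is precisely the content of Theorem 3.1 in \cite{kuna2018convergence}, which is also the paper's only stated justification for being able to choose $c>1$, so your deferral at that point coincides with the paper's own.
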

\begin{proof}
	The proof follows immediately from \cite{kuna2018convergence}. Indeed the calculation is similar to the one presented by the authors for the proof of Lemma 4.2 and the fact that we can choose $c>1$ is possible thanks to their Theorem 3.1.
\end{proof}

Moreover we multiply and divide for $e^{|x_1-x_2|}$. Hence from the fact that $|x_1-x_2|\le |V_0|-1\le k$ and using the second equality of \eqref{Correlation5}, we find
\begin{eqnarray}
	&&\left|\sum_{k=1}^{N-2}\tilde{P}_{N,|\Lambda|}(2+k)\bar{B}_{\Lambda,\beta}(2,k)\right|\nonumber
	\\
	&&\;\;\;\;\;\le\frac{1}{2}\sum_{(x_1,x_2)\in\Lambda^2}|\varphi(x_1)\varphi(x_2)| e^{-|x_1-x_2|}e^{|V_0|}\sum_{k=1}^{N-2}\tilde{P}_{N,|\Lambda|}(2+k)\hat{B}_{\Lambda,\beta}(2;k)\nonumber
	\\
	&&\;\;\;\;\;\le\frac{C}{2}\left(\frac{N}{|\Lambda|}\right)^2\sum_{(x_1,x_2)\in\Lambda^2}|\varphi(x_1)\varphi(x_2)| e^{-|x_1-x_2|}\sum_{k=1}^{N-2}e^{-(c-1)k}\nonumber
	\\
	&&\;\;\;\;\;\le\frac{C_1}{2}\left(\frac{N}{|\Lambda|}\right)^2\left[\sum_{(x_1,x_2)\in\Lambda^2}|\varphi(x_1)\varphi(x_2)|e^{-|x_1-x_2|}\right],
	\label{lastE}
\end{eqnarray}
where $c$ and $C$ are the constants of Lemma \ref{Lemma4.1} and $C_1$ is a positive constant bigger than $C$ and independents on $N,\Lambda$.  

Then from \eqref{Correlation2}, \eqref{BoundErr}, \eqref{Correlation8} and \eqref{lastE} we have
\begin{eqnarray}
	&&\sum_{(x_1,x_2)\in\Lambda^2}|\varphi(x_1)\varphi(x_2)||u^{(2)}_{\Lambda,N}(x_1,x_2)|\nonumber
	\\
	&&\;\;\;\;\;\le\sum_{(x_1,x_2)\in\Lambda^2}|\varphi(x_1)\varphi(x_2)|
	\bigg\{\left(\frac{N}{|\Lambda|}\right)^2\bigg[(e^{4\beta J}-1)\mathbf{1}_{\{|x_1-x_2|=1\}}+\mathbf{1}_{\{x_1=x_2\}}\nonumber
	\\
	&&\;\;\;\;\;\;\;\;\;\;\;\;\;\;+\frac{(e^{4\beta J}-1)\mathbf{1}_{\{|x_1-x_2|=1\}}+\mathbf{1}_{\{x_1=x_2\}}}{N}+C e^{-|x_1-x_2|}\bigg]+ C_1\frac{1}{|\Lambda|}\bigg\}\;\;\;\;\;\;\;\;\;\;
\end{eqnarray}
with $C,C_1\in\mathbb{R}^+$. Then the conclusion follows choosing as test functions the Kronecker deltas in $q_1$ and $q_2$.

\section{Precise large and local moderate deviations, proofs of Theorems \ref{TH-LD}, \ref{TH-MD} and Corollary \ref{CLT}}
\label{SectionLMD}

In this section we compare our approach for the study of precise large  and local moderate deviations (Theorems \ref{TH-LD}, \ref{TH-MD} and Corollary \ref{CLT}) with  the ones presented in \cite{del1974local} and, in particular, in \cite{dobrushin1994large}. The proofs of the Theorems are the ones given in \cite{scola2020local} (recalled in Appendix \ref{S2}), since once one can write
$\log Z^{\g}_{\Lambda,\beta}$ as a power series of the density (Theorem \ref{TH1})
then the proof is the same. Note that, thanks to \eqref{HamIs}, \eqref{Ham2}, \eqref{NewH}, \eqref{GcPF2} - \eqref{Mu-H}, the probability defined in \eqref{Prob1} can be expressed via the grand-canonical probability measure for the Ising model with $-1$ boundary conditions.

\begin{proof}[Proof of Theorem \ref{TH-LD}]
	The proof follows from Theorem \ref{TH1} and the proof of Theorem 2.1 in \cite{scola2020local} and is recalled in Appendix \ref{S2}. 
\end{proof}

\begin{proof}[Proof of Theorem \ref{TH-MD}]
	The proof follows from Theorem \ref{TH1} and  the proof of Theorem 2.2 in \cite{scola2020local} and is recalled in Appendix \ref{S2}. 
\end{proof}	
\begin{proof}[Proof of Corollary \ref{CLT}]
	The proof follows from Theorem \ref{TH-MD} for $\alpha=1/2$.
\end{proof}

In order to do the comparison,  we briefly recall  the approach 
followed in \cite{del1974local}  and \cite{dobrushin1994large}. For a fixed chemical potential $\mu_0$, we define the logarithmic generating function for the moments at finite volume associated to the probability given by  \eqref{Prob1} as
\begin{equation}
	L^{\z}_{\Lambda,\beta, \mu_0}(\mu):=\log\left[\sum_{N\ge0}\probBC(A_N)e^{\beta\mu N}\right],
	\label{H}
\end{equation}
with $A_N$ given by \eqref{DeviationSet}. From \eqref{MeanV} and \eqref{H} we have 
\begin{equation}
	\bRL=\frac{1}{|\Lambda|}\frac{1}{\beta}\frac{d}{d\mu}L^{\z}_{\Lambda,\beta,\mu_0}(\mu)\bigg|_{\mu=0}
\end{equation}
and
\begin{equation}
	\sigma^2_{\Lambda,\z}(\mu_0):=\mathbb{E}^{\z}_{\Lambda,\mu_0}\left[\frac{(N-\bRL|\Lambda|)^2}{|\Lambda|}\right]=
	\frac{1}{|\Lambda|}\frac{1}{\beta^2}\frac{d^2}{d\mu^2}L^{\z}_{\Lambda,\beta,\mu_0}(\mu)\bigg|_{\mu=0}.
	\label{2-mom}
\end{equation}
In general, denoting  by $G^{m}_{\Lambda,\z}$ the $m$-th moment per unit of volume, we have:
\begin{equation}
	G^{m}_{\Lambda,\z}=\frac{1}{\beta^{m}}\frac{d^m}{d\mu^m}L^{\z}_{\Lambda,\beta,\mu_0}(\mu)\bigg|_{\mu=0}.
	\label{m-mom}
\end{equation}
Let us define the characteristic function as
\begin{equation}\label{char}
	\varphi_{\Lambda,\mu'}(t):=\sum_{N\ge0}\mathbb{P}_{\Lambda,\mu'}^{\z}(A_{N})e^{itN},
\end{equation} 
where for $\mu'=\mu+\mu_0$, the ``excess (by $\mu$) probability measure" is given by
\begin{equation}\label{char2}
	\mathbb{P}_{\Lambda,\mu+\mu_0}^{\z}(A_{N}):=\exp\left\{-L^{\z}_{\Lambda,\beta,\mu_0}(\mu)+\beta\mu N\right\}\probBC(A_{N}).\nonumber
\end{equation}

First, for the large deviations, i.e., considering a deviation $\tN$ given by \eqref{Deviation} with $\alpha=1$, the probability of $A_{\tN}$ can be expressed using the excess measure optimizing over $\mu$
such that $\mathbb{P}_{\Lambda,\mu+\mu_0}^{\z}(A_{\tilde N})\sim 1$, i.e., by making $\tN$ ``central" with respect to the new measure. In this way we obtain 
\begin{equation}\begin{split}
		\lim_{|\Lambda|\to\infty}(\beta|\Lambda|)^{-1}\log\mathbb{P}^{\z}_{\Lambda,\mu_0}\left(A_{\tN}\right)=\lim_{|\Lambda|\to\infty}\frac{-\mathcal{I}^{\z}_{\Lambda,\beta,\mu_0}(\tN)}{\beta|\Lambda|}=- \frac{I_\beta(\tilde\rho;\rho_0)}{\beta},\nonumber
	\end{split}
	\label{Classical-LDP}
\end{equation}  
where 
\begin{equation}
	\mathcal{I}^{\z}_{\Lambda,\beta,\mu_0}(\tN):=\sup_{\mu}\left\{\beta\mu \tN-L^{\z}_{\Lambda,\beta,\mu_0}(\mu)  \right\}
	\label{I}
\end{equation}
and
\begin{equation}
	I_\beta(\tilde\rho;\rho_0):=\beta f_{\beta}(\tilde\rho)-\beta f_{\beta}(\rho_0)-\beta f'_{\beta}(\rho_0)(\tilde\rho-\rho_0).\nonumber
	\label{LD-F}
\end{equation}
In the previous  formulas  we have that the quantity $\rho_0$, which is the limit of $\bRL$ as $\Lambda\to\infty$, is also such that 
\begin{equation}
	f'_{\beta}(\rho_0)=\mu_0\Leftrightarrow p'_{\beta}(\mu_0)=\rho_0,\nonumber
	\label{Mu0Rho0}
\end{equation}
where the last relations follow from \eqref{LeTr}, \eqref{LeTr1} and the fact that we are far from the phase transition (see also \eqref{MeanV}). Moreover, let us note that from \eqref{bN-N*} $\rho_0$ is also the thermodynamic limit of $\rho^*_{\Lambda}$.  For later use, from \eqref{GcFreeE}, we have that \eqref{LDO} is the \textquotedblleft volume normalized version\textquotedblright of \eqref{I}, i.e., 
\begin{equation}
	I^{GC}_{\Lambda,\beta,\z}(\tRL;\bRL)=\mathcal{I}^{\z}_{\Lambda,\beta,\mu_0}(\tN)|\Lambda|^{-1}.
	\label{FV-I}
\end{equation}

A more precise formula at finite volume as well as the higher order corrections terms come from the inversion of \eqref{char}:
\begin{equation}\label{invert}
	\mathbb{P}_{\Lambda,\tilde\mu_{\Lambda}}^{\z}(A_{\tilde N})=
	\frac{1}{2\pi}\int_{-\pi}^{\pi}e^{-it\tN}\varphi_{\Lambda,\tilde\mu_{\Lambda}}(t)dt
\end{equation} 
where by $\tilde\mu_{\Lambda}$ we denote the optimal chemical potential found in \eqref{I}. This is also the approach of \cite{del1974local} for $\alpha=1/2$ and $\tilde{\mu}_{\Lambda}=\mu_0$. In \cite{del1974local} and \cite{dobrushin1994large} the authors, starting from \eqref{invert}, recover the inversion of the characteristic function of the Gaussian distribution which gives them, calculating the integral, a finite volume formula with an approximation for the high order correction term. This can be done  by the Taylor expansion at the second order of the characteristic function around $t=0$ and applying, for instance, the Gnedenko's method to estimate the integral. In particular, we refer to equations (4.1)-(4.10) of Section 4 in \cite{del1974local} and equations (2.1.30)-(2.1.34) Subsection 2.1 in \cite{dobrushin1994large}. On the other hand, our results come from a direct approach without passing from the calculation of the integral in \eqref{invert}, which also gives us an explicit formulation of the error terms. In fact, considering Theorem \ref{TH-LD}, the numerator in the fraction in the left hand side of \eqref{P3} comes immediately from definition \eqref{GcFreeE} and the Radon-Nikodyn derivative of our probability measure with respect to the one with $\tilde{\mu}_{\Lambda}$ (instead of $\mu_0$). This can be clearly observed in equation (2.53)  in \cite{scola2020local}.  Moreover, thanks  to the explicit formula that we have for the finite volume free energy  (Theorem \ref{TH1}), together
with $Z^{\z}_{\Lambda,\beta}(\tN)=\exp\left\{-\beta|\Lambda|f_{\Lambda,\beta,\z}(\tN)\right\}$ and \eqref{ProbA}, 
we can also obtain in an explicit and direct way both the normalization as well as the error terms  
as it is shown in Lemmas 6.3 in \cite{scola2020local}.

Second, starting from \eqref{char2} and considering the approach expressed in \cite{dobrushin1994large}, one can go a step further and study the local moderate deviations ($\alpha\in[1/2, 1)$ in \eqref{Deviation} by taking the Taylor expansion of \eqref{I} around $\bRL|\Lambda|$ and obtaining:
\begin{equation}
	\mathcal{I}^{\z}_{\Lambda,\beta,\mu_0}(\tN)=\frac{\beta^2(\tN-\bRL|\Lambda|)^2}{2|\Lambda|\sigma^2_{\Lambda,\z}(\mu_0)}+\sum_{j\ge3}\frac{Q^{(j)}_{\Lambda,\z}}{j!}\left(\frac{\tN-\bRL|\Lambda|}{|\Lambda|}\right)^j,
	\label{IT}
\end{equation}
where the coefficients $Q^{(j)}_{\Lambda,\z}$ are polynomials which can be computed via the moments \eqref{m-mom} as it is explained next.  In the previous equation we used \eqref{FV-I} and the fact that, from \eqref{GcFreeE}, we have $(f^{GC}_{\Lambda,\beta,\z})''(\bRL)=[p''_{\Lambda,\beta,\z}(\mu_0)]^{-1}=\beta[\sigma^2_{\Lambda,\z}(\mu_0)]^{-1}$. Note also that in \eqref{IT} we do not have the terms $\mathcal{I}^{\z}_{\Lambda,\beta,\mu_0}(\bRL|\Lambda|)$ and $(\mathcal{I}^{\z}_{\Lambda,\beta,\mu_0})'(\bRL|\Lambda|)$. This happens because, using the fact that $L^{\z}_{\Lambda,\beta,\mu_0}(\mu)$ is a strictly convex function of $\mu$, the supremum in \eqref{I} is obtained at $\mu=0$ when we consider $\bRL|\Lambda|$ instead of $\tN$ (see also \eqref{FV-I}).

In \cite{dobrushin1994large} (equations (1.2.18)-(1.2.23)), the polynomials $Q^{(j)}_{\Lambda,\z}$ are calculated substituting
\begin{equation}
	\beta(\tN-\bRL|\Lambda|)= (L^{\z}_{\Lambda,\beta,\mu_0})'(\tilde{\mu}_{\Lambda})-(L^{\z}_{\Lambda,\beta,\mu_0})'(0)=\beta\tilde{\mu}_{\Lambda} \sigma^2_{\Lambda,\z}(\mu_0)|\Lambda|+\sum_{m\ge3}\frac{(\beta\tilde{\mu}_{\Lambda})^{m-1}G^{m}_{\Lambda,\z}}{(m-1)!}
	\label{DS1}
\end{equation}
in 
\begin{equation}\label{DS2}
	\tilde{\mu}_{\Lambda}=(\mathcal{I}^{\z}_{\Lambda,\beta,\mu_0})'(\tN)=\frac{\tN-\bN}{\sigma^{2}_{\Lambda,\z}(\mu_0)|\Lambda|}+\sum_{m\ge3}Q^{(m)}_{\Lambda,\z}\frac{(\tN-\bN)^{m-1}}{(m-1)!},
\end{equation}
where  $(\mathcal{I}^{\z}_{\Lambda,\beta,\mu_0})'(\tN)$ is given by in \eqref{firstDeriv} for $x=\tN$, so that one obtains 
\begin{equation}
	Q^{(m)}_{\Lambda,\z}\equiv P\left(\frac{G^{3}_{\Lambda,\z}}{\sigma^{2}_{\Lambda,\z}(\mu_0)|\Lambda|},...,\frac{G^{m}_{\Lambda,\z}}{\sigma^{2}_{\Lambda,\z}(\mu_0)|\Lambda|},Q^{(3)}_{\Lambda,\z},...,Q^{(m-1)}_{\Lambda,\z}\right),\nonumber
\end{equation}  
where $P(x_1,..,x_n)$  is a polynomial in $x_1,...,x_n$. For example
\begin{equation}
	Q^{(3)}_{\Lambda,\z}=\frac{-G^3_{\Lambda,\z}}{(\sigma^{2}_{\Lambda,\z}(\mu_0)|\Lambda|)^3}\;\;\;{\rm{and}}\;\;Q^{(4)}_{\Lambda,\z}=\frac{-G^4_{\Lambda,\z}}{(\sigma^{2}_{\Lambda,\z}(\mu_0)|\Lambda|)^4}+3\frac{(G^{3}_{\Lambda,\z})^2}{(\sigma^{2}_{\Lambda,\z}(\mu_0)|\Lambda|)^5}.\nonumber
	\label{Firsts-Terms}
\end{equation}
We observe that also in this case our results follow directly from Theorem \ref{TH1}. Indeed - using \eqref{CeCan} - from \eqref{FVfreeE}, \eqref{FeRho}, \eqref{GcPF1} and \eqref{ProbA} the proofs of Theorem \ref{TH-LD} and especially Theorem \ref{TH-MD} and Corollary \ref{CLT} follow from the  Taylor expansion of the free energy defined in \eqref{ModFreeE} around $\rho^*_{\Lambda}$ (instead of the above indirect procedure,  i.e. via \eqref{DS1} and \eqref{DS2}). This can be seen from the fact that the main quantities involved - $D^{\alpha}_{\Lambda, \z}(\rho^*_{\Lambda})$ and $E_{|\Lambda|}(\alpha,u',\rho^*_{\Lambda})$ given by \eqref{Var2} and \eqref{Error} - are defined in terms of derivatives of $\F(\cdot)$.
Furthermore,   these derivatives are a version of the $Q^{(m)}_{\Lambda,\z}$'s in the canonical ensemble that are also equivalent in the thermodynamic limit.


Moreover, another way to recover the $Q^{(m)}_{\Lambda,\z}$'s without using \eqref{DS1} and \eqref{DS2}, is given by the following remark.

\begin{remark}
	Let us note that another way for determining the terms $Q_{\Lambda,\z}^{(j)}$, can be derived directly from \eqref{I}. Indeed, let us define for all $x\in\mathbb{R}^+$ the function 
	\begin{equation}
		x\mapsto \mathcal{J}^{\z}_{\Lambda,\beta,\mu_0}(x):=\sup_{\mu\in\mathbb{R}}\left\{\beta x\mu-L^{\z}_{\Lambda,\beta,\mu_0}(\mu)\right\}=\beta x\mu(x)-L^{\z}_{\Lambda,\beta,\mu_0}(\mu(x)),
	\end{equation}
	where $\mu(x)$ is implicitly defined by  $\beta x=(L^{\z}_{\Lambda,\beta,\mu_0})'(\mu)$. Note that, when $x=N\in\mathbb{N}$, we get $\mathcal{J}^{\z}_{\Lambda,\beta,\mu_0}(x)=\mathcal{I}^{\z}_{\Lambda,\beta,\mu_0}(N)$, which happens if and only if $N=(L^{\z}_{\Lambda,\beta,\mu_0})'(\mu(N))$. 
	Hence we have:
	\begin{equation}
		(\mathcal{J}^{\z}_{\Lambda,\beta,\mu_0})'(x)=\beta\mu(x)=\beta\mu(x)+\mu'(x)[\beta x-(L^{\z}_{\Lambda,\beta,\mu_0})'(\mu(x))]
		\label{firstDeriv}
	\end{equation}
	and 
	\begin{eqnarray}
		(\mathcal{J}^{\z}_{\Lambda,\beta,\mu_0})''(x)&=&\beta \mu'(x)=2\beta \mu'(x)-(\mu'(x))^2(L^{\z}_{\Lambda,\beta,\mu_0})''(\mu(x))\nonumber
		\\
		&+&\mu''(x)[\beta x-(L^{\z}_{\Lambda,\beta,\mu_0})'(\mu(x))],\nonumber
	\end{eqnarray}
	which gives
	\begin{equation}
		(\mathcal{J}^{\z}_{\Lambda,\beta,\mu_0})''(x)=\beta \mu'(x)=\beta^2[(L^{\z}_{\Lambda,\beta,\mu_0})''(\mu(x))]^{-1}.\nonumber
	\end{equation}
	In this way we have:
	\begin{equation}
		\frac{\partial^m \mathcal{J}^{\z}_{\Lambda,\beta,\mu_0}(x)}{\partial x^m}=\frac{\partial^{m-2}[\beta^2(L^{\z}_{\Lambda,\beta,\mu_0})''(\mu(x))]^{-1}}{\partial x^{m-2}},
		\label{1}
	\end{equation}
	with
	\begin{equation}
		\mu'(x)=\beta [(L^{\z}_{\Lambda,\beta,\mu_0})''(\mu(x))]^{-1}.
		\label{2}
	\end{equation}
	Then, the coefficient $Q_{\Lambda,\z}^{(j)}$ - which is the derivative of order $j$ of $\mathcal{J}_{\Lambda,\beta,\mu_0}^{\z}(x)$ for $x=\bN$ - can be obtained from \eqref{1} and \eqref{2}  taking into account that, when $x=\bN$, the quantities in the right hand side of \eqref{1} and \eqref{2} are given by \eqref{2-mom} and \eqref{m-mom}.
	
	Note that the relations expressed in \eqref{1} and \eqref{2} are the same which exist  between $f_{\beta}(\rho)$ and $p_{\beta}(\mu)$ as well as their grand-canonical finite volume versions ($\fgc(\rho_{\Lambda})$ and $p_{\Lambda,\beta,\z}(\mu)$).
\end{remark}

We conclude the discussion by noting that the formulations expressed in \cite{del1974local} and \cite{dobrushin1994large} are equivalent to our formulation (Theorems \ref{TH-LD}, \ref{TH-MD} and Corollary \ref{CLT}). This is due to the fact that for all $\hat{\rho}_{\Lambda}$ and $\hat{\rho}_{\Lambda}^*$ which satisfy \eqref{MeanV} and \eqref{N*} (with the appropriate chemical potential $\mu(\hat{\rho}_{\Lambda})$), from \eqref{Mu0Rho*} and Remark \ref{RemarkF_GC} we have (\cite{scola2020local}): 
\begin{equation}
	\left|\fgc(\hat{\rho}_{\Lambda})-\F(\hat{\rho}^*_{\Lambda})\right|\le C\frac{\log\sqrt{|\Lambda|}}{|\Lambda|}\nonumber
\end{equation}
and
\begin{equation}
	\left|(\fgc)'(\hat{\rho}_{\Lambda})-\F'(\hat{\rho}^*_{\Lambda})\right|\le C\frac{1}{|\Lambda|}.\nonumber
\end{equation}
Then, defining $I^{C}_{\Lambda,\beta,\z}(\rho_{\Lambda};\rho^*_{\Lambda}):=\beta \F(\rho_{\Lambda})-\beta \F(\rho^*_{\Lambda})+\beta \F'(\rho^*_{\Lambda})(\rho_{\Lambda}-\rho^*_{\Lambda})$ and remembering that $\rho_{\Lambda}=N/|\Lambda|$, from \eqref{FeRho} and \eqref{Mu0Rho*} we have
\begin{eqnarray}
	\left|[I^{C}_{\Lambda,\beta,\z}(\hat{\rho}^*_{\Lambda};\rho^*_{\Lambda})-\beta \F'(\rho^*_{\Lambda})(\hat{\rho}^*_{\Lambda}-\rho^*_{\Lambda})]-[\beta f_{\Lambda,\beta,\z}(\hat{N}^*)+\beta f_{\Lambda,\beta,\z}(N^*)]\right|\nonumber\\
	\le C\frac{\log\sqrt{|\Lambda|}}{|\Lambda|}\nonumber
\end{eqnarray}
as well as
\begin{equation}
	\left|I^{C}_{\Lambda,\beta,\z}(\hat{\rho}^*_{\Lambda}; \rho^*_{\Lambda})-I^{GC}_{\Lambda,\beta,\z}(\rho_{\Lambda};\bRL)\right|\le C_1\frac{\log\sqrt{|\Lambda|}}{|\Lambda|},\nonumber 
\end{equation}
with $C,\;C_1\in\mathbb{R}^+$. Moreover, this equivalence is also true in the thermodynamic limit, which is proved in Sections 3 and 4 of \cite{dobrushin1994large} for the quantities defined in \eqref{H}, \eqref{m-mom} and \eqref{IT}, where in our case it comes from Theorem \ref{TH1} and \cite{scola2020local}. Indeed from Appendix B in   \cite{scola2020local} we have:   
\begin{equation}
	|f^{(m)}_{\beta}(\rho_0)-\F^{(m)}(\rho^*_{\Lambda})|\le C\frac{|\partial\Lambda|}{|\Lambda|},\nonumber
\end{equation}
for all $m\ge0$.

\section*{Acknowledgements}
It is a great pleasure to thank Sabine Jansen, Errico Presutti and Dimitrios Tsagkarogiannis for their generous availability and for assisting the author with many necessary, stimulating and fruitful discussions.

\appendix

\section{Proofs of Theorem \ref{TH-LD}, Theorem \ref{TH-MD} and Corollary \ref{CLT}}
\label{S2}	

We define the following objects:
\begin{equation}
	J^{C}_{\mu}(N,N'):=\frac{e^{\beta\mu N}\canBC(N)}{e^{\beta\mu N'}\canBC(N')},
	\label{NUM-C}
\end{equation}
and
\begin{equation}
	K(\mu,N):=\left(\frac{\GcanBC(\mu)}{e^{\beta\mu N}\canBC(N)}\right)^{-1}.
	\label{Def-Normalizzazione}
\end{equation} 

Let us note that from \eqref{FVfreeE} the term $J^{C}_{\mu_0}(\tN,\bN)$ can be written as: 

\begin{equation}
	J^{C}_{\mu_0}(\tN,\bN)=\exp\left\{ \beta\mu_0(\tN-\bN)+|\Lambda|\beta f_{\Lambda,\beta,\z}(\bN)-|\Lambda|\beta f_{\Lambda,\beta,\z}(\tN)\right\},
	\label{J}
\end{equation}
which is the finite volume version of \eqref{LD-F} viewed in the canonical ensemble. Moreover, we can also write
\begin{equation}\label{KwithJ}
	[K(\mu_0,\bar N_\Lambda)]^{-1}=\sum_{N\ge0}J^{C}_{\mu_0}(N,\bN).
\end{equation} 
Finally, before giving the proofs of the theorems, we remark that the object defined in \eqref{H}, can be written as follows:
\begin{equation}
	L^{\z}_{\Lambda,\beta,\mu_0}(\mu)=\beta|\Lambda|\left[p_{\beta,\Lambda,\z}(\mu+\mu_0)-p_{\beta,\Lambda,\z}(\mu_0)\right].
	\label{relation1}
\end{equation}

\begin{proof}[Proof of Theorem \ref{TH-LD}]		
	We rewrite $\probBC(A_{\tN})$ as follows:
	\begin{equation}\label{option1}
		\probBC(A_{\tN})=\frac{\GcanBC(\tilde{\mu}_{\Lambda})e^{\beta\mu_0\tN}}{\GcanBC(\mu_0)e^{\beta\tilde{\mu}_{\Lambda}\tN}}\mathbb{P}^{\z}_{\Lambda,\tilde{\mu}_{\Lambda}}(A_{\tN}).
	\end{equation}
	In the previous one we did the Radon-Nikod\'ym derivative of our probability measure with respect to the one with $\tilde{\mu}_{\Lambda}$ instead of $\mu_0$. Note that the definition of $\tilde{\mu}_{\Lambda}$ given via \eqref{GcFreeE}, i.e., such that 
	\begin{equation}
		\beta\fgc(\tRL)=\beta \tilde{\mu}_{\Lambda}\tRL-\beta p_{\Lambda,\beta,\z}(\tilde{\mu}_{\Lambda}),
		\label{f-GC_Tmu}
	\end{equation}
	is equivalent to define implicitly $\tilde{\mu}_{\Lambda}$ as the chemical potential such that
	\begin{equation}
		\frac{\tilde{N}}{|\Lambda|}=\mathbb{E}^{\z}_{\Lambda,\tilde{\mu}_{\Lambda}}\left[\frac{N}{|\Lambda|}\right]=\frac{\partial}{\partial\mu}p_{\Lambda,\beta,\z}(\mu)\bigg|_{\mu=\tilde{\mu}_{\Lambda}}.
		\label{Tilde-mu}
	\end{equation}
	Moreover, from \eqref{relation1} and \eqref{FV-I} we have that this $\tilde{\mu}_{\Lambda}$ is equal to the one which satisfies \eqref{I}.

	From \eqref{FVLattice}, \eqref{GcFreeE}, \eqref{LDO} and \eqref{f-GC_Tmu}  we get
	\begin{eqnarray}
		\frac{\GcanBC(\tilde{\mu}_{\Lambda})e^{\beta\mu_0\tN}}{\GcanBC(\mu_0)e^{\beta\tilde{\mu}_{\Lambda}\tN}} &= &\exp\left\{|\Lambda|\left[\beta\mu_0\tilde{\rho}_{\Lambda}-\beta\tilde{\mu}_{\Lambda}\tilde{\rho}_{\Lambda}+\beta p_{\Lambda,\beta,\z}(\tilde{\mu}_{\Lambda})-\beta p_{\Lambda,\beta,\z}(\mu_0)\pm\beta\mu_0\bar{\rho}_{\Lambda}\right]\right\}\nonumber
		\\
		&=&\exp\left\{|\Lambda|\left[\beta \fgc(\bRL)-\beta\fgc(\tRL)+\beta\mu_0(\tRL-\bRL)\right]\right\}\nonumber
		\\
		&=& \exp\left\{- |\Lambda| I^{GC}_{\Lambda,\beta,\z}(\tRL;\bRL)\right\}.
	\end{eqnarray}
	On the other hand, denoting with $\tN^*$ the number of particles such that 
	\begin{equation}
		\sup_{N}\left\{e^{\beta\tilde{\mu}_{\Lambda}N}\canBC(N)\right\}=e^{\beta\tilde{\mu}_{\Lambda}\tN^*}\canBC(\tN^*),
	\end{equation}
	using \eqref{NUM-C} and \eqref{Def-Normalizzazione} we have
	\begin{equation}
		\mathbb{P}^{\z}_{\Lambda,\tilde{\mu}_{\Lambda}}(A_{\tN})=J^{C}_{\tilde{\mu}_{\Lambda}}(\tN,\tN^*)K(\tilde{\mu}_{\Lambda},\tN^*).
	\end{equation}
	The novelty here is that we compute the above term using cluster expansions instead of inverting the characteristic function.
	First, we recall that we have
	\begin{equation}\label{est}
		|\tN-\tN^*|\le C,
	\end{equation}
	for some $C>0$ which does not depend on $\Lambda$.
	Then we find
	\begin{eqnarray}
		J^{C}_{\tilde{\mu}_{\Lambda}}(\tN,\tN^*)&=&\exp\left\{ S'_{|\Lambda|}(\tRL^*)(\tN-\tN^*)-\sum_{m\ge 2}\frac{(\tN-\tN^*)^{m}}{|\Lambda|^{m-1}}\frac{\F^{(m)}(\tRL^*)}{m!}+ |\Lambda|S_{|\Lambda|}(\tRL^*)\right\}\nonumber
		\\
		&\s &\exp\left\{|\Lambda|S_{|\Lambda|}(\tRL^*)\right\}\left(1+\frac{1}{|\Lambda|}\right),
		\label{Eq1LD}
	\end{eqnarray}
	since \eqref{est} and where $S_{|\Lambda|}(\tilde{\rho}^*_{\Lambda})$ is a term of order $\log{\sqrt{|\Lambda|}}/|\Lambda|$ (see Appendix B in \cite{scola2020local}).
	
	The study of $K(\tilde{\mu}_{\Lambda},\tN^*)$ is the same as the one done in Lemma 6.3 of \cite{scola2020local} where now we consider $\tN^*$ as center of fluctuations of order 1/2.
	Hence the conclusion follows from 
	\begin{eqnarray}
		K(\tilde{\mu}_{\Lambda},\tN^*)\le e^{-|\Lambda|S_{|\Lambda|}(\tRL^*)}\left[\sqrt{2\pi D_{\Lambda,\z}(\tRL^*)|\Lambda|}\left(1-\frac{C}{\sqrt{|\Lambda|}}\right)\right]^{-1}
	\end{eqnarray}	
	and
	\begin{equation}
		K(\tilde{\mu}_{\Lambda},\tN^*)\ge e^{-|\Lambda|S_{|\Lambda|}(\tRL^*)}\left[\sqrt{2\pi D_{\Lambda,\z}(\tRL^*)|\Lambda|}\left(1+\frac{C}{\sqrt{|\Lambda|}}\right)\right]^{-1}
	\end{equation}
	for some $C\in\mathbb{R}^+$ independent on $\Lambda$.
\end{proof}


\begin{proof}[Proof of Theorem \ref{TH-MD}]
	
	From \eqref{NUM-C}, \eqref{Def-Normalizzazione} we have 
	\begin{equation}
		\probBC(A_{\tN})=J^{C}_{\mu_0}(\tN,N^*)K(\mu_0,N^*).
	\end{equation}
	Then using Lemma 6.2 in \cite{scola2020local} we have 
	\begin{equation}
		J^{C}_{\mu_0}(\tN,N^*)\ge\exp\left\{-\frac{(u')^2|\Lambda|^{2\alpha-1}}{2D^{\alpha}_{\Lambda,\beta,\z}(\rho^*_{\Lambda})}+|\Lambda|S_{|\Lambda|}(\rho^*_{\Lambda})-E_{|\Lambda|}(\alpha,u',\rho^*_{\Lambda})\right\}
	\end{equation}
	and
	\begin{equation}
		J^{C}_{\mu_0}(\tN,N^*)\le \exp\left\{-\frac{(u')^2|\Lambda|^{2\alpha-1}}{2D^{\alpha}_{\Lambda,\beta,\z}(\rho^*_{\Lambda})}+|\Lambda|S_{|\Lambda|}(\rho^*_{\Lambda})+E_{|\Lambda|}(\alpha,u',\rho^*_{\Lambda})\right\}
	\end{equation}
	where $S_{|\Lambda|}(\rho^*_{\Lambda})$ is a term of order $\log{\sqrt{|\Lambda|}}/|\Lambda|$ (see Appendix B in \cite{scola2020local}).
	
	The conclusion follows from Lemma 6.3 in \cite{scola2020local} which gives us
	\begin{eqnarray}
		K(\mu_0,N^*)\le e^{-|\Lambda|S_{|\Lambda|}(\rho^*_{\Lambda})}\left[\sqrt{2\pi D^{\alpha,+}_{\Lambda,\z}(\rho^*_{\Lambda})|\Lambda|}\left(1-E_{|\Lambda|}(\alpha,u',\rho^*_{\Lambda})\right)\right]^{-1}
	\end{eqnarray}	
	and
	\begin{equation}
		K(\mu_0,N^*)\ge e^{-|\Lambda|S_{|\Lambda|}(\rho^*_{\Lambda})}\left[\sqrt{2\pi D^{\alpha,+}_{\Lambda,\z}(\rho^*_{\Lambda})|\Lambda|}\left(1+E_{|\Lambda|}(\alpha,u',\rho^*_{\Lambda})\right)\right]^{-1}.
	\end{equation}
\end{proof}

\bibliographystyle{plain}
\bibliography{biblioIsing}
\nocite{*}

\end{document}